\newtheorem{theorem}{Theorem}
\newtheorem{observation}{Observation}
\newtheorem{definition}{Definition}
\def\e{\ensuremath{{\xi}}}
\newcommand{\cI}{\ensuremath{\mathcal{I}}}
\newcommand{\cX}{\ensuremath{\mathcal{X}}}
\newcommand{\egt}{EGT}
\newcommand{\be}{\begin{eqnarray}}
\newcommand{\ee}[1]{\label{#1}\end{eqnarray}}
\newcommand{\ese}{\end{eqnarray*}}
\newcommand{\bse}{\begin{eqnarray*}}
\def\beq{\begin{equation}}
\def\eeq{\end{equation}}
\def\fnote#1{\footnote}
\def\*{{{\LARGE\bf $^*$}}}
\def\cX{{\cal X}}
\def\log{\mathop{{\rm log}}}
\newcommand{\xleft}{\mathopen{}\mathclose\bgroup\left}
\newcommand{\xright}{\aftergroup\egroup\right}
\newcommand{\eps}{\epsilon}
\newcommand{\rmp}{RM$^+$}
\icmltitlerunning{Regret Minimization in Behaviorally-Constrained Zero-Sum Games}
\begin{document}


\twocolumn[
\icmltitle{Regret Minimization in Behaviorally-Constrained Zero-Sum Games}

\begin{icmlauthorlist}
\icmlauthor{Gabriele Farina}{cmu}
\icmlauthor{Christian Kroer}{cmu}
\icmlauthor{Tuomas Sandholm}{cmu}
\end{icmlauthorlist}
\icmlaffiliation{cmu}{Carnegie Mellon University, Pittsburgh PA 15213 USA}
\icmlcorrespondingauthor{Gabriele Farina}{gfarina@cs.cmu.edu}
\icmlcorrespondingauthor{Christian Kroer}{ckroer@cs.cmu.edu}
\icmlcorrespondingauthor{Tuomas Sandholm}{sandholm@cs.cmu.edu}

\icmlkeywords{boring formatting information, machine learning, ICML}

\vskip 0.3in
]
\printAffiliationsAndNotice{}

\begin{abstract}
  No-regret learning has emerged as a powerful tool for solving extensive-form
  games. This was facilitated by the counterfactual-regret minimization (CFR) 
  framework, which relies on the instantiation of regret
  minimizers for simplexes at each information set of the game. 
%
%
  We use an
  instantiation of the CFR framework to develop algorithms for solving behaviorally-constrained (and, as a special case, \emph{perturbed} in the Selten sense) extensive-form games, which allows us to compute approximate Nash equilibrium
  refinements.
  Nash equilibrium refinements are motivated by a major deficiency in Nash equilibrium: it provides virtually no guarantees on how it will play in parts of the game tree that are reached with zero probability. Refinements can mend this issue, but have not been adopted in practice, mostly due to a lack of scalable algorithms.
  We show that, compared to standard algorithms, our method finds solutions that have substantially better refinement properties, while enjoying a convergence rate that is comparable to that of state-of-the-art algorithms for Nash equilibrium computation both in theory and practice.
\end{abstract}

\section{Introduction}

No-regret learning algorithms have become a powerful tool for solving
large-scale zero-sum extensive-form games
(EFGs)~\cite{Bowling15:Heads-up,Brown15:Hierarchical}. This has largely been
facilitated by the \emph{counterfactual-regret minimization} (CFR) algorithm~\cite{Zinkevich07:Regret} and its newer
variants~\cite{Lanctot09:Monte,Sandholm10:State,Bowling15:Heads-up,Brown15:Regret-Based,Brown17:Dynamic,Brown17:Reduced}.
This framework works by defining a notion of regret local to an information set,
and instantiating a standard regret minimizer at each information
set in order to minimize local regret. \citet{Zinkevich07:Regret} prove that
this scheme of local regret minimization leads to a Nash equilibrium in
two-player zero-sum extensive-form games of perfect recall. The framework works
with any regret-minimizing algorithm, but in practice variants of the
\emph{regret matching} algorithm have been
dominant~\cite{Hart00:Simple,Bowling15:Heads-up,Brown15:Regret-Based,Brown15:Hierarchical}.
We investigate the extension of regret-matching$^+$
(\rmp)~\citep{Tammelin15:Solving}, an even faster regret-matching algorithm,
to more general regret-minimization problems over (finitely-generated) convex
polytopes.
We use these results to instantiate {\rmp} for linearly constrained
simplexes, which in turn allows us to model and solve behaviorally-constrained
EFGs~(which are EFGs with additional linear constraints on the simplexes at
each information set). An important special case of this framework is
behaviorally-perturbed EFGs, which can be used to compute Nash equilibrium
refinements.\footnote{The idea of certain kinds of behavioral perturbations to CFR has been suggested by~\citet{Neller13:Introduction}. They suggest that at every information set, with small probability $\epsilon$, a player will make a random move.
  However, they provide no results of what the refinement consequences are (i.e., what kind of refinement this would lead to), and it
  is unclear whether the proposed method actually leads to refinements. In contrast, we establish a connection to (approximate) EFPEs in this paper. Furthermore, Neller and Lanctot cite~\citet{Miltersen10:Computing} which is about quasi-perfect equilibrium (where a player assumes she will not make errors in the future), while the $\epsilon$ modeling of Neller and Lanctot makes $\epsilon$ errors at future information sets as well.}


Nash equilibrium refinements are motivated by major deficiencies in the
Nash equilibrium solution concept: Nash equilibria provide no guarantees on
performance in information sets that are reached with probability zero in equilibrium, beyond not giving
up more utility than the value of the game. Thus, if an opponent makes a 
mistake, Nash equilibrium is not guaranteed to capitalize on it, but may instead
give back up to all of that utility~\citep{Miltersen10:Computing}. This is especially
relevant when Nash equilibria are used as a solution concept for playing against
imperfect opponents. Equilibrium refinements ameliorate this issue by
introducing further constraints on behavior in information sets that are reached
with probability zero. We will be interested in equilibrium concepts that
achieve this through the notion of \emph{perturbations} or
\emph{trembling hands}~\citep{Selten75:Reexamination}. At each decision-point, a
player is assumed to tremble with some small probability, and a Nash equilibrium
is then computed in this perturbed game. A refinement is then a limit point of
the sequence of Nash equilibria achieved as the probability of trembles is taken
to zero. In \emph{quasi-perfect equilibria}, players take into account only the
trembles of their opponents~\cite{VanDamme84:Relation}, whereas in an
\emph{extensive-form perfect equilibrium} (EFPE), players take into account
mistakes made both by themselves and opponents~\citep{Selten75:Reexamination}.

We compare our algorithm for perturbed EFGs to state-of-the-art large-scale zero-sum EFG-solving algorithms: the standard CFR$^+$ algorithm~\citep{Tammelin15:Solving} and the excessive gap technique (EGT)~\citep{Nesterov05:Excessive} instantiated with a state-of-the art \emph{smoothing function}~\citep{Nesterov05:Smooth,Hoda10:Smoothing,Kroer15:Faster,Kroer17:Theoretical}. We find that our perturbed variant of CFR$^+$ converges (in the perturbed game) at the same rate as those algorithms converge while ours leads to orders of magnitude more refined strategies. Our algorithm also converges at the same rate in the unperturbed game, almost until the point where the imposed behavioral constraints necessarily prevent further convergence.

%
%
\section{Related work}

No-regret algorithms have a long history in EFG solving.
\citet{Gordon06:No-Regret} developed the \emph{Lagrangian Hedging} algorithm,
which can be used to find a Nash equilibrium in EFGs. However, it suffers from a
drawback: it requires projection onto the strategy space at each iteration. \citet{Zinkevich07:Regret}
developed CFR, which avoids projection while maintaining the
same convergence rate. It has since been extended in a
number of ways. \citet{Lanctot09:Monte} showed how to incorporate sampling in
CFR. \citet{Brown15:Regret-Based} showed how to achieve greater pruning in CFR, thereby reducing the iteration costs. CFR$^+$ is a state-of-the-art
variant of CFR~\citep{Tammelin15:Solving}, which has
vastly superior practical performance compared to standard CFR, though it is
not known to be stronger from a theoretical perspective.
\citet{Gordon08:No-Regret} shows how no-regret algorithms can also be utilized
for computing extensive-form correlated equilibria in EFGs.

Polynomial-time algorithms have been proposed for computing certain equilibrium
refinements in two-player zero-sum perfect-recall EFGs.
\citet{Miltersen10:Computing} develop a linear program (LP) for computing
quasi-perfect equilibria by choosing a sufficiently small perturbation to
realization plans. \citet{Farina17:Extensive_Form} develop a similar approach
for EFPE computation, but rely on perturbations to behavioral strategies of
players. These approaches rely on solving modified variants of the sequence-form
LP for computing Nash equilibrium~\citep{Stengel96:Efficient} in EFGs. These algorithms are of theoretical interest only and do not work in practice. They require rational numbers of precision $n \log n$ bits, where $n$ is the number of sequences in the game. Another issue is that LP algorithms do not scale to large EFGs even when just finding Nash equilibria, and in practice CFR-based or EGT-based approaches are used to achieve scalability.
\citet{Kroer17:Smoothing} recently showed how smoothing functions for
first-order methods such as EGT can be extended to games with perturbations.

\citet{Johanson07:Computing} consider robust strategies that arise from assuming
that the opponent will randomize between playing a Nash equilibrium and a
strategy within some model of opponent behavior. \citet{Johanson09:Data} consider
a similar model-biased Nash equilibrium approach on games where an independent
model is used at each information set. \citet{Ganzfried11:Game} develop an
opponent modeling approach that adds opponent-modeled constraints across
information sets. Our approach provides a principled framework for solving
model-biased games that use general constraints on per-information set
behavioral strategies. Constraints across information sets currently require
 the much less scalable LP approach.

\section{Preliminaries}
  We briefly introduce several of the basic concepts we use in the rest of the paper. We denote by $\mathbb{R}_+$ and $\mathbb{R}_-$ the set of non-negative and non-positive reals, respectively.
\subsection{Normal-Form Games}
\begin{definition}\label{def:std nfg}
  A \emph{two-player zero-sum normal-form game} (for the rest of the paper, simply \emph{normal-form game} or \emph{NFG}) is a tuple $({A}_1, {A}_2, u)$ where $A_1$ represents the finite set of actions that player 1 can play, $A_2$ represents the finite set of actions that player 2 can play, and $u : A_1\times A_2 \to \mathbb{R}$ is the payoff function for player 1, mapping the pair of actions $(a_1, a_2)$ of the players into the payoff for player 1. The corresponding payoff for player 2 is given by $-u(a_1,a_2)$.
\end{definition}

Usually, the payoff function $u$ is given as a matrix $U$, called the \emph{payoff matrix} of the game. The rows of $U$ represent the actions $\{a_{1,1}, \dots, a_{1,n}\} = A_1$ of player 1, while the columns of $U$ represent the actions $\{a_{2,1},\dots,a_{2,m}\} = A_2$ of player 2. At the intersection of the $i$-th row and the $j$-th column is the payoff for the action pair $(a_{1,i}, a_{2,j})$.

\begin{definition}
	A mixed strategy $\pi$ for player $i\in\{1,2\}$ is a probability mass function over the set $A_i$.
\end{definition}

When players play according to mixed strategies $\pi_1$ and $\pi_2$ respectively, the expected payoff is given by
\begin{equation}\label{eq:expected utility nfg}
  \mathbb{E}_{\pi_1,\pi_2}(u) = \sum_{a_1\in A_1}\sum_{a_2\in A_2} \pi_1(a_1)\pi_2(a_2) u(a_1, a_2).
\end{equation}

	\subsection{Generalized Normal-Form Games}
\citet{Telgarsky11:Blackwell} and~\citet{Abernethy11:Blackwell} propose a generalization of the concept of normal-form games, which conveniently allows us to remove all expectation operators, making the notation lighter and more legible. In this generalization, players select deterministic strategies from a convex compact set. For a normal-form game, this set is the space of all mixed strategies.
\begin{definition}
A \emph{two-player zero-sum generalized normal-form game} $\Gamma=(\mathcal{X},\mathcal{Y}, u)$ is a tuple defined by a pair of convex and compact action spaces $\mathcal{X} \subseteq \mathbb{R}^n$, $\mathcal{Y} \subseteq \mathbb{R}^m$, one for each player, as well as a \emph{biaffine} utility function $u : \mathcal{X}\times \mathcal{Y} \to \mathbb{R}$. The utility function $u(x,y)$ maps the pair of actions $(x, y) \in \mathcal{X}\times \mathcal{Y}$ of the players into the payoff for player 1, while the corresponding payoff for player 2 is given by $-u(x,y)$.
\end{definition}

\begin{observation}
  \label{obs:nfg to gnfg}
Any normal-form game can be mapped to an instance of a generalized normal-form game. Given $\Gamma=(A_1,A_2, u)$, where $|A_1|=n$ and $|A_2|=m$, the set of all mixed strategies for player 1 forms the $n$-dimensional simplex $\mathcal{X}=\Delta_{n}$, while the set of all mixed strategies for player 2 forms the $m$-dimensional simplex $\mathcal{Y}=\Delta_{m}$. Let $U$ be the payoff matrix associated with $\Gamma$. Using Equation~\ref{eq:expected utility nfg}, we conclude that $\Gamma$ is equivalent to the generalized two-player zero-sum normal-form game $\Gamma^* = (\mathcal{X},\mathcal{Y}, u^*)$, where $
u^*(x,y) = x^\top U y$ for all $(x,y)\in \mathcal{X}\times\mathcal{Y}$.
\end{observation}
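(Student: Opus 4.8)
The plan is to verify the three conditions implicit in the assertion that $\Gamma$ and $\Gamma^*$ are equivalent: (i) that $\mathcal{X}=\Delta_n$ and $\mathcal{Y}=\Delta_m$ are valid action spaces for a generalized normal-form game; (ii) that $u^*(x,y)=x^\top U y$ is biaffine; and (iii) that, under the natural identification of a mixed strategy with a point of the corresponding simplex, the payoff in $\Gamma$ agrees with the payoff in $\Gamma^*$. The whole argument is a direct verification once these three items are isolated.

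For (i), I would note that $\Delta_n = \{x\in\mathbb{R}^n : x_i\ge 0,\ \sum_{i=1}^n x_i = 1\}$ is the intersection of finitely many half-spaces with an affine hyperplane, hence convex, and is closed and bounded, hence compact; the same holds for $\Delta_m$. This matches the hypotheses of the definition of a generalized normal-form game. I would then record the bijection underlying the identification: a mixed strategy $\pi_1$ is a probability mass function on $A_1=\{a_{1,1},\dots,a_{1,n}\}$, so it corresponds uniquely to $x=(\pi_1(a_{1,1}),\dots,\pi_1(a_{1,n}))\in\Delta_n$, and conversely; likewise $\pi_2\leftrightarrow y\in\Delta_m$.

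For (ii), fixing $y$ makes $x\mapsto x^\top U y=(Uy)^\top x$ linear, and fixing $x$ makes $y\mapsto x^\top U y$ linear, so $u^*$ is biaffine. For (iii), writing $U_{ij}=u(a_{1,i},a_{2,j})$ and expanding the bilinear form,
$$ x^\top U y = \sum_{i=1}^n\sum_{j=1}^m x_i\,U_{ij}\,y_j = \sum_{i=1}^n\sum_{j=1}^m \pi_1(a_{1,i})\,\pi_2(a_{2,j})\,u(a_{1,i},a_{2,j}), $$
which is precisely the right-hand side of Equation~\ref{eq:expected utility nfg}; thus $u^*(x,y)=\mathbb{E}_{\pi_1,\pi_2}(u)$. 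Since player $2$ receives $-u$ in $\Gamma$ and $-u^*$ in $\Gamma^*$, corresponding profiles yield identical payoffs to both players.

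I do not expect any genuine analytic difficulty here. The only step requiring care is pinning down what \emph{equivalence} formally means: that the strategy-to-simplex map is a bijection preserving the expected-payoff function, so that best responses, the value of the game, and Nash equilibria transfer unchanged between $\Gamma$ and $\Gamma^*$. Making this correspondence explicit, rather than the mechanical expansion of $x^\top U y$, is the substantive content of the observation.
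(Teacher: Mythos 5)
Your proposal is correct and follows exactly the reasoning the paper intends: the paper offers no separate proof beyond the inline appeal to Equation~\ref{eq:expected utility nfg}, and your verification (convexity and compactness of $\Delta_n$, $\Delta_m$; biaffinity of $x^\top U y$; the expansion recovering the expected utility) simply makes that implicit argument explicit. Your closing remark about pinning down the meaning of equivalence via the payoff-preserving bijection is a reasonable elaboration, not a departure from the paper's approach.
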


\subsection{Extensive-Form Games}
\begin{definition}
	A \emph{two-player zero-sum extensive-form game with imperfect information and perfect recall} $\Gamma$ is a tuple $(H, Z, A, P, f_c, \mathcal{I}_1, \mathcal{I}_2, u)$ composed of:
	\begin{itemize}[nolistsep,itemsep=1mm]
	\item $H$: a finite set of possible sequences (or histories) of actions, such that the empty sequence $\varnothing \in H$, and every prefix $z$ of $h$ in $H$ is also in $H$.
	\item $Z\subseteq H$: the set of terminal histories, i.e. those sequences that are not a proper prefix of any sequence.
	\item $A$: a function mapping $h \in H\setminus Z$ to the set of available actions at non-terminal history $h$.
	\item $P$: the player function, mapping each non-terminal history $h \in H\setminus Z$ to $\{1, 2, c\}$, representing the player who takes action after $h$. If $P(h)=c$, the player is chance.
	\item $f_c$: a function assigning to each $h\in H\setminus Z$ such that $P(h) = c$ a probability mass function over $A(h)$.
	\item $\mathcal{I}_i$, for $i\in\{1,2\}$: partition of $\{h\in H: P(h)=i\}$ with the property that $A(h)=A(h')$ for each $h,h'$ in the same set of the partition. For notational convenience, we will write $A(I)$ to mean $A(h)$ for any of the $h\in I$, where $I\in \mathcal{I}_i$. ${\cal I}_i$ is the information partition of player $i$, while the sets in ${\cal I}_i$ are called the information sets of player $i$.
	\item $u$: utility function mapping $z\in Z$ to the utility (a real number) gained by player 1 when the history is reached. The corresponding utility for player 2 is given by $-u(z)$.
	\end{itemize}
	We further assume that all players can recall their previous actions and the corresponding information sets.
\end{definition}
In the rest of the paper, we will use the more relaxed term \emph{extensive-form game}, or EFG, to mean a two-player zero-sum extensive-form game with imperfect information and perfect recall.

\begin{observation}\label{obs:efg to gnfg}
Extensive-form games can be represented as generalized NFGs, for example, via the normal form representation or  sequence form representation~\cite{Romanovskii62:Reduction,Koller96:Efficient,Stengel96:Efficient}.
\end{observation}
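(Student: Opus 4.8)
The plan is to construct, for a given EFG $\Gamma$, a generalized NFG $(\mathcal{X},\mathcal{Y},u^*)$ whose action spaces and biaffine utility exactly reproduce the strategic interaction of $\Gamma$. I will carry this out using the sequence-form representation, since it is the polynomially-sized encoding relevant to the scalable algorithms developed later in the paper, and then remark on the simpler normal-form construction.

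First I would define, for each player $i\in\{1,2\}$, the set $\Sigma_i$ of \emph{sequences}: a sequence is either the empty sequence $\varnothing$ or a pair $(I,a)$ consisting of an information set $I\in\mathcal{I}_i$ and an action $a\in A(I)$, understood as the ordered list of the player's own (information set, action) choices along a path from the root. By perfect recall, every $I\in\mathcal{I}_i$ has a unique parent sequence, which I denote $p(I)$. A \emph{realization plan} is a vector $x\in\mathbb{R}_+^{\Sigma_i}$ recording, for each sequence, the probability that player $i$ takes exactly the actions along it; I take $\mathcal{X}$ to be the set of all realization plans, cut out by the sequence-form (flow) constraints $x(\varnothing)=1$ and $\sum_{a\in A(I)} x(I,a)=x(p(I))$ for every $I\in\mathcal{I}_1$, and define $\mathcal{Y}$ analogously over $\mathcal{I}_2$. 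Since each of $\mathcal{X}$ and $\mathcal{Y}$ is the intersection of the nonnegative orthant with an affine subspace and every coordinate is forced into $[0,1]$ by the flow constraints, both are convex and compact, as required.

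Next I would define the utility. By perfect recall, each terminal history $z\in Z$ determines a unique pair $(\sigma_1(z),\sigma_2(z))\in\Sigma_1\times\Sigma_2$ of player sequences together with a chance factor $c(z)$, the product of the probabilities $f_c$ along the chance moves on the path to $z$. Setting
\[
u^*(x,y)=\sum_{z\in Z} c(z)\,u(z)\,x(\sigma_1(z))\,y(\sigma_2(z))
\]
yields a function of the form $u^*(x,y)=x^\top U y$, where the $(\sigma_1,\sigma_2)$ entry of $U$ aggregates $c(z)u(z)$ over all terminals with those sequences. This is bilinear, hence biaffine, so $(\mathcal{X},\mathcal{Y},u^*)$ is a valid generalized NFG.

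The crux is the equivalence between $\Gamma$ and this generalized NFG. I must show that for any behavioral strategies the true expected utility of $\Gamma$ coincides with $u^*(x,y)$ evaluated at the induced realization plans, and conversely that every point of $\mathcal{X}$ (resp.\ $\mathcal{Y}$) arises from some behavioral strategy. Both directions rest on the perfect-recall assumption: it is exactly what guarantees that the probability of reaching $z$ factorizes as $c(z)\,x(\sigma_1(z))\,y(\sigma_2(z))$ into independent per-player and chance contributions, and that the flow constraints characterize precisely the achievable realization weights. This factorization is the main obstacle and the point where perfect recall is indispensable, since without it the reach probability would not split across the players and $u^*$ would fail to be biaffine. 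For completeness I would note the alternative normal-form construction, in which $\mathcal{X}$ and $\mathcal{Y}$ are the simplexes of mixed strategies over the finitely many pure strategies and $u^*$ is the bilinear expected-payoff form of Observation~\ref{obs:nfg to gnfg}; this construction is immediate but exponentially large, whereas the sequence form is polynomial in the size of the game tree.
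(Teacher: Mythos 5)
Your proposal is correct and takes the same route the paper implicitly relies on: the paper offers no proof, deferring to the cited constructions, and what you give is precisely the standard sequence-form representation from those references --- realization plans forming a compact convex polytope cut out by the flow constraints, a bilinear payoff $u^*(x,y)=x^\top U y$ whose $(\sigma_1,\sigma_2)$ entry aggregates $c(z)\,u(z)$ over terminal histories, and realization equivalence with behavioral strategies under perfect recall --- together with the exponential normal-form alternative already handled by Observation~\ref{obs:nfg to gnfg}. One minor imprecision in your closing remark: even without perfect recall, the reach probability of $z$ under behavioral strategies still factors into chance times per-player products, so $u^*$ as a formal bilinear form would not literally ``fail to be biaffine''; what actually breaks is that a player's own-sequence need not be constant across the histories of an information set (so $\sigma_i(z)$ and the flow constraints are no longer well-defined) and that behavioral strategies no longer realize all mixed strategies (Kuhn's theorem fails) --- immaterial here, since the paper's EFG definition assumes perfect recall.
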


  \subsection{Regret and Regret Minimization}
  Suppose there exists an iterative algorithm which, at each step $t = 1,\dots, T$, computes a strategy $x_t \in \mathcal{X}$ for player 1, and plays a (generalized) normal-form game $(\mathcal{X},\mathcal{Y}, u)$ against player 2 using such strategy. Let $y_t$ be the strategy used by player 2 at step $t$. The average external regret of player 1 up to step $T$ against action $\hat x \in \mathcal{X}$ is  
  \[
      \vspace{-1mm}
    \bar R_1^T(\hat x) = \frac{1}{T}\sum_{t=1}^T u(\hat x, y_t) - u(x_t, y_t).
      \vspace{-1mm}
  \]
  The case for player 2 is symmetrical. A \emph{regret-minimizing scheme} is a
  function that assigns, for each sequence of past actions
  $x_1,y_1,\ldots,x_{t-1},y_{t-1}$, an action $x_t$ such that
  $\limsup_{T\rightarrow \infty}\max_{\hat{x}\in\mathcal{X}}\bar{R}_1^T(\hat{x})\leq 0$.

  Regret-matching~(RM)~\cite{Hart00:Simple} is a regret-minimizing scheme for normal-form games, based on Blackwell's approachability theorem~\cite{Blackwell56:Analog}.
	The following theorem, a proof of which is given by~\citet{Cesa06:Prediction}, characterizes the convergence rate of RM.
	\begin{theorem}
	  Given a normal-form game $(A_1, A_2, u)$, the maximum average external regret for player 1 at iteration $T$, when player 1 plays according to the regret-matching algorithm, is 
	  \[
      \vspace{-1mm}
		\max_{\hat x} \bar R_1^T(\hat x) \le \gamma\frac{\sqrt{|A_1|}}{\sqrt{T}},
      \vspace{-1mm}
	  \]
	  where $\gamma \doteq \max_{x,y} u(x,y) - \min_{x,y} u(x,y)$.
	\end{theorem}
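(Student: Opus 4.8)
The plan is to follow the standard Blackwell-approachability potential argument. First I would set up the regret bookkeeping: for a pure action $a \in A_1$ write $u(a, y_t)$ for the payoff of deterministically playing $a$, and define the instantaneous regret vector $r_t \in \mathbb{R}^{|A_1|}$ by $r_t(a) = u(a, y_t) - u(x_t, y_t)$, with cumulative regret $R^T = \sum_{t=1}^T r_t$. Since $u$ is biaffine, the map $\hat x \mapsto \sum_t [u(\hat x, y_t) - u(x_t, y_t)]$ is affine in $\hat x$ and therefore attains its maximum over the simplex $\mathcal{X} = \Delta_{|A_1|}$ at a vertex, i.e. at a pure action. Hence $\max_{\hat x} \bar R_1^T(\hat x) = \frac{1}{T}\max_{a \in A_1} R^T(a)$, and it suffices to bound $\max_a R^T(a)$.

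The core of the argument is the potential function $\Phi^t = \sum_{a} (\max\{R^t(a), 0\})^2$, the squared Euclidean norm of the positive part of the cumulative regret vector. Two facts drive the proof. First, RM chooses $x_t$ proportional to the positive part of $R^{t-1}$, and by biaffinity $\langle x_t, r_t \rangle = u(x_t, y_t) - u(x_t, y_t) = 0$; scaling then gives the Blackwell orthogonality $\langle [R^{t-1}]_+, r_t\rangle = 0$, where $[\cdot]_+$ denotes the componentwise positive part. Second, I would establish the elementary scalar inequality $(\max\{x+y,0\})^2 \le (\max\{x,0\})^2 + 2\max\{x,0\}\, y + y^2$ by a short case analysis on the signs of $x$ and $x+y$.

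Applying the scalar inequality componentwise with $x = R^{t-1}(a)$ and $y = r_t(a)$, then summing over $a$, yields $\Phi^t \le \Phi^{t-1} + 2\langle [R^{t-1}]_+, r_t\rangle + \|r_t\|_2^2$, and the orthogonality annihilates the cross term, leaving $\Phi^t \le \Phi^{t-1} + \|r_t\|_2^2$. Because every entry $r_t(a)$ is a difference of two payoffs, each lying in $[\min u, \max u]$, we have $|r_t(a)| \le \gamma$ and thus $\|r_t\|_2^2 \le |A_1|\gamma^2$. Telescoping from $\Phi^0 = 0$ gives $\Phi^T \le T|A_1|\gamma^2$. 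Finally, $\max_a R^T(a) \le \max_a \max\{R^T(a),0\} \le \sqrt{\Phi^T} \le \gamma\sqrt{|A_1|\,T}$, and dividing by $T$ delivers the claimed bound $\gamma\sqrt{|A_1|}/\sqrt{T}$.

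The main obstacle is verifying the Blackwell orthogonality condition in tandem with the potential inequality, since this is precisely where the definition of regret matching enters the argument; everything else is routine bookkeeping. In particular one must handle the degenerate case $[R^{t-1}]_+ = 0$ (all cumulative regrets nonpositive), in which RM falls back to, say, the uniform strategy — but then the cross term $\langle [R^{t-1}]_+, r_t\rangle$ vanishes trivially, so the telescoping inequality survives regardless of the tie-breaking rule.
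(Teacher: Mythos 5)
Your proof is correct and is essentially the same argument the paper relies on: the paper does not prove this theorem itself but defers to \citet{Cesa06:Prediction}, whose proof is exactly your quadratic-potential Blackwell argument --- the potential $\Phi^t = \|[R^t]_+\|_2^2$, the orthogonality $\langle [R^{t-1}]_+, r_t\rangle = 0$ arising from $x_t \propto [R^{t-1}]_+$ and bilinearity, the componentwise scalar inequality, and telescoping with $\|r_t\|_2^2 \le |A_1|\gamma^2$ to get $\max_a R^T(a) \le \gamma\sqrt{|A_1| T}$. All steps check out, including your reduction of the max over the simplex to pure actions and your handling of the degenerate case $[R^{t-1}]_+ = 0$.
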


	Regret-matching$^+$ is an extension of RM, and converges significantly faster in practice.  \citeauthor{Tammelin15:Solving} (2015; Lemma~2) proved that the convergence rate of RM$^+$ is the same as that of RM above.

\subsection{Nash Equilibria and Refinements}
    We now review the needed solution concepts from game theory. We mostly focus on generalized normal-form games, allowing a compact presentation of concepts pertaining both normal-form games and extensive-form games.

      \begin{definition}[Approximate best response]
        Given a generalized normal-form game $(\mathcal{X}, \mathcal{Y}, u)$ and a strategy $y\in\mathcal{Y}$, we say that $x\in\mathcal{X}$ is an $\eps$-best response to $y$ for player 1 if
        \(
          u(x,y) + \epsilon \ge u(\hat{x}, y)
        \)
        for all $\hat x \in \mathcal{X}$. Symmetrically, given $x\in\mathcal{X}$, we say that $y\in\mathcal{Y}$ is an $\eps$-best response to $x$ for player 2 if
        \(
          -u(x, y) + \epsilon \ge -u(x, \hat y)
        \)
        for all $\hat y \in \mathcal{Y}$.
      \end{definition}

	  \begin{definition}[Approximate Nash equilibrium]\label{def:approximate nash}
	    Given a generalized normal-form game $(\mathcal{X}, \mathcal{Y}, u)$, the strategy pair $(x, y) \in \mathcal{X}\times\mathcal{Y}$ is a $\epsilon$-Nash equilibrium for the game
	    if $x$ is an $\eps$-best response to $y$ for player 1, and $y$ is an $\eps$-best response to $x$ for player 2.
	  \end{definition}
	
	 \begin{definition}[Nash equilibrium]
	   Given a generalized normal-form game $(\mathcal{X}, \mathcal{Y}, u)$, a \emph{Nash equilibrium} for the game is a $0$-Nash equilibrium.
	 \end{definition}
	
	 There exists a well-known relationship between regret and approximate Nash
	  equilibria (Definition~\ref{def:approximate nash}), as summarized in the next theorem.
	
	  \begin{theorem}\label{thm:regret nash}
	    In a zero-sum game, if the average external regrets of the players up to step $T$ are such that
	    \[
      \vspace{-1mm}
	      \bar R^T_1(\hat{x}) \le \epsilon_1, \qquad
	      \bar R^T_2(\hat{y}) \le \epsilon_2
      \vspace{-1mm}
	    \]
	    for all actions $\hat{x}\in\mathcal{X}, \hat{y}\in\mathcal{Y}$,
	    then the strategy pair
	    \[
      \vspace{-1mm}
	      (\bar{x}_T, \bar{y}_T) \doteq \xleft(\frac{1}{T}\sum_{i=1}^T x_i, \frac{1}{T}\sum_{i=1}^T y_i\xright)\in\mathcal{X}\times\mathcal{Y}
      \vspace{-1mm}
	    \]
	    is an $(\epsilon_1 + \epsilon_2)$-Nash equilibrium.
	  \end{theorem}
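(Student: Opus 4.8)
The plan is to exploit the biaffinity of $u$, which is the only structural property needed: since $u$ is affine in each argument separately, averaging inside a fixed coordinate commutes with applying $u$. Concretely, I would first record the two identities $\frac{1}{T}\sum_{t=1}^T u(\hat x, y_t) = u(\hat x, \bar y_T)$ and $\frac{1}{T}\sum_{t=1}^T u(x_t, \hat y) = u(\bar x_T, \hat y)$, valid for every fixed $\hat x\in\mathcal{X}$ and $\hat y\in\mathcal{Y}$. These let me rewrite both regret hypotheses purely in terms of the average strategies $\bar x_T,\bar y_T$ and the single scalar $V \doteq \frac{1}{T}\sum_{t=1}^T u(x_t,y_t)$, which appears (with opposite roles) in both regret definitions.

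Next I would translate the hypotheses. Using that player~2 maximizes $-u$, the symmetric regret is $\bar R_2^T(\hat y) = \frac{1}{T}\sum_{t=1}^T u(x_t,y_t) - u(x_t,\hat y)$. Then the bound $\bar R_1^T(\hat x)\le\epsilon_1$ becomes $u(\hat x,\bar y_T)\le V+\epsilon_1$ for all $\hat x$, and $\bar R_2^T(\hat y)\le\epsilon_2$ becomes $u(\bar x_T,\hat y)\ge V-\epsilon_2$ for all $\hat y$. The crux is then to eliminate the auxiliary quantity $V$ by substituting the average strategies into themselves. Taking $\hat y=\bar y_T$ in the second inequality yields $V\le u(\bar x_T,\bar y_T)+\epsilon_2$; chaining this into the first gives $u(\hat x,\bar y_T)\le u(\bar x_T,\bar y_T)+\epsilon_1+\epsilon_2$ for all $\hat x$, which is exactly the statement that $\bar x_T$ is an $(\epsilon_1+\epsilon_2)$-best response to $\bar y_T$. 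Symmetrically, taking $\hat x=\bar x_T$ in the first inequality gives $V\ge u(\bar x_T,\bar y_T)-\epsilon_1$; chaining into the second gives $u(\bar x_T,\hat y)\ge u(\bar x_T,\bar y_T)-\epsilon_1-\epsilon_2$ for all $\hat y$, i.e. $\bar y_T$ is an $(\epsilon_1+\epsilon_2)$-best response to $\bar x_T$. By Definition~\ref{def:approximate nash}, the pair $(\bar x_T,\bar y_T)$ is therefore an $(\epsilon_1+\epsilon_2)$-Nash equilibrium.

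The computation itself is short, so the only thing to get right is the bookkeeping around the shared term $V$: each regret individually compares an average payoff against $V$, and the combined $(\epsilon_1+\epsilon_2)$ bound emerges precisely because the two slacks accumulate as $V$ is squeezed from above and below by $u(\bar x_T,\bar y_T)$ offset by a single $\epsilon$ on each side. I expect the main (and rather minor) obstacle to be stating $\bar R_2^T$ consistently with player~2 maximizing $-u$, so that the sign of $V - u(\bar x_T,\hat y)$ lands correctly; once biaffinity is invoked, everything else follows mechanically.
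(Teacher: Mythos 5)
Your proof is correct, and it is the standard argument: the paper itself states this theorem as a well-known fact without including a proof, and your route --- using biaffinity of $u$ to collapse the time-averages into $u(\hat x,\bar y_T)$ and $u(\bar x_T,\hat y)$, then squeezing the shared quantity $V=\frac{1}{T}\sum_{t=1}^T u(x_t,y_t)$ from both sides by substituting $\hat x=\bar x_T$ and $\hat y=\bar y_T$ --- is exactly the canonical derivation (cf.\ the folk theorem in the regret-learning literature). Your handling of the sign convention for $\bar R_2^T$ matches the paper's convention that player~2 maximizes $-u$, and the implicit use of convexity of $\mathcal{X},\mathcal{Y}$ (so that $\bar x_T,\bar y_T$ are legal strategies to substitute) is justified by the definition of a generalized normal-form game, so there is nothing to fix.
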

	
	  Theorem~\ref{thm:regret nash} basically says that if there exists an iterative algorithm able to progressively propose strategies so that the maximum average external regret go to zero, then recovering a Nash equilibrium is straightforward, and just a matter of averaging the individual strategies proposed.

We now turn to the class of \emph{perturbed games}~\citep{Selten75:Reexamination}. Intuitively, a perturbation restricts the set of
playable strategies by enforcing a lower bound on the probability of
playing each action. We recall the definition and some of the properties of game
perturbations, starting from the normal-form case. We focus on player 1, but
remark that the same definitions hold symmetrically for player 2 as well.
	
\begin{definition}\label{def:pert nfg new}
  Let $\Gamma=(A_1, A_2, u)$ be an NFG and let $\Gamma^*=(\Delta_{|A_1|},
  \Delta_{|A_2|}, u^*)$ be its generalized NFG representation (see
  Observation~\ref{obs:nfg to gnfg}). A perturbation is a function $p: A_1 \cup
  A_2\to \mathbb{R}_+$ such that $\sum_{a\in A_1} p(a) < 1$ and $\sum_{a \in
    A_2} p(a) < 1$. The corresponding \emph{perturbed NFG} $\Gamma_p$ is the
  generalized NFG where each action $a$ must be played with probability at least
  $p(a)$. Formally, $\Gamma_p=(\tilde{\mathcal{X}}_p,\tilde{\mathcal{Y}}_p,u^*)$ where $\tilde{\mathcal{X}}_p=\left\{ x\in
    \Delta^{|A_1|}: x_a \geq p(a) \forall a\in A_1 \right\}$. $\tilde{\mathcal{Y}}_p$ is defined
  analogously.
\end{definition}
	
	

In the case of extensive-form games, a perturbation for player 1 assigns a lower-bound on each action playable by the player. More precisely:
    \begin{definition}
      Let $\Gamma=(H, Z, A, P, f_c, \mathcal{I}_1, \mathcal{I}_2, u)$ be an extensive-form game. A perturbation is a function $p$ mapping each pair $(I, a)$ where $I\in \mathcal{I}_1\cup\mathcal{I}_2$ and $a \in A(I)$ to a non-negative real, such that
      \[
      \vspace{-1mm}
        \sum_{a \in A(I)} p(I, a) < 1 \qquad\forall\ I\in \mathcal{I}_1\cup \cI_2.
      \vspace{-1mm}
      \]

      The corresponding \emph{perturbed EFG} $\Gamma_p$ is the analogous game
      where each action $a$ at each information set $I$ has to be played with
      probability at least $p(I,a)$.
    \end{definition}

    Perturbations play an important role in equilibrium refinement, as they form
    the basis for the concept of \emph{equilibrium
      perfection}~\citep{Selten75:Reexamination}. In this paper we only focus on
    the case of \emph{extensive-form perfect equilibria (EFPEs)}.
	\begin{definition}\label{def:efpe}
		A strategy pair $(x, y) \in\mathcal{X}\times\mathcal{Y}$ is an EFPE of $\Gamma$ if it is a limit point of a sequence $\{(x_p, y_p)\}_{p \rightarrow 0}$ where $(x_p, y_p)$ is a Nash equilibrium of the perturbed game $\Gamma_p$.
	\end{definition}
	
	Intuitively, an EFPE is an equilibrium refinement that takes into account an imperfect ability to deterministically commit to a single action. 
\section{Generalized Normal-Form Games over Finitely-Generated Convex Polytopes}\label{sec:convex polytopes}
In this section, we show how to adapt a regret-minimization algorithm to handle
generalized normal-form games played on finitely-generated convex polytopes.
The key insight is that when the action space is a finitely-generated convex
polytope, the generalized game can be cast back as a normal-form game, i.e. a
generalized normal-form game played over simplexes, and ``solved'' by a
regret-minimization algorithm; subsequently, the solution for the normal-form
game gets mapped back into the polytope. This is achieved by constructing new
simplex action spaces for the players, where each point in a simplex denotes
a convex combination of weights on the vertices of that players'
finitely-generated convex polytopal action space.
  
  \begin{theorem}\label{thm:generalized hannan}
    Let $\Gamma = (\mathcal{X}, \mathcal{Y}, u)$ be a generalized normal-form game played on the finitely-generated convex polytopes $\mathcal{X}$ and $\mathcal{Y}$. There exists a regret-minimizing scheme for player 1 in $\Gamma$.
  \end{theorem}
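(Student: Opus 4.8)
The plan is to reduce the game on polytopes to an ordinary normal-form game on simplexes, where regret-matching already supplies a regret-minimizing scheme, and then to map the resulting strategies back. Since $\mathcal{X}$ is a finitely-generated convex polytope, it has a finite vertex set $v_1,\dots,v_k$; collecting these as the columns of a matrix $V$, every point of $\mathcal{X}$ is of the form $V\lambda$ with $\lambda\in\Delta_k$, and conversely $V\lambda\in\mathcal{X}$ for all $\lambda\in\Delta_k$. Do the same for $\mathcal{Y}$ with vertices $w_1,\dots,w_\ell$ and matrix $W$. First I would define the surrogate utility $\hat u(\lambda,\mu)\doteq u(V\lambda,W\mu)$ on $\Delta_k\times\Delta_\ell$. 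Because $u$ is biaffine and $\lambda\mapsto V\lambda$, $\mu\mapsto W\mu$ are affine, $\hat u$ is biaffine; in fact, using that $\lambda$ and $\mu$ sum to $1$, the affine and constant terms are absorbed and one gets $\hat u(\lambda,\mu)=\lambda^\top\hat U\mu$ with $\hat U_{ij}=u(v_i,w_j)$. Thus $(\Delta_k,\Delta_\ell,\hat u)$ is exactly the generalized-NFG representation of a normal-form game with payoff matrix $\hat U$ (Observation~\ref{obs:nfg to gnfg}), for which regret-matching is a regret-minimizing scheme.

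Next I would describe the induced scheme for $\Gamma$. Given the history $x_1,y_1,\dots,x_{t-1},y_{t-1}$, the scheme computes, for each vertex $v_i$ and each past round $s<t$, the number $u(v_i,y_s)$ (all determined by the history), feeds this reward sequence to regret-matching on $(\Delta_k,\Delta_\ell,\hat u)$ to obtain $\lambda_t\in\Delta_k$, and plays $x_t\doteq V\lambda_t\in\mathcal{X}$. Note that since $y_s\in\mathcal{Y}$ we may write $y_s=W\mu_s$ for some $\mu_s\in\Delta_\ell$, so that this reward sequence is exactly the one arising in the surrogate normal-form game against the opponent plays $\mu_s$; the (immaterial) non-uniqueness of $\mu_s$ does not affect any $u(v_i,y_s)$.

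The crux is transferring the regret guarantee. Fix a comparator $\hat x=V\hat\lambda\in\mathcal{X}$. Since $u(\cdot,y_t)$ is affine, $\bar R_1^T(\hat x)=\frac1T\sum_{t}\bigl[u(V\hat\lambda,y_t)-u(V\lambda_t,y_t)\bigr]$, and because $u(\cdot,y_t)$ is affine and $\mathcal{X}$ is the convex hull of its vertices, the maximum over $\hat x\in\mathcal{X}$ of $\frac1T\sum_t u(\hat x,y_t)$ is attained at a vertex. Hence
\[
  \max_{\hat x\in\mathcal{X}}\bar R_1^T(\hat x)=\max_{\hat\lambda\in\Delta_k}\frac1T\sum_{t}\bigl[\hat u(\hat\lambda,\mu_t)-\hat u(\lambda_t,\mu_t)\bigr],
\]
i.e. the worst-case average external regret in $\Gamma$ equals the worst-case average external regret of regret-matching in the surrogate game. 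Since regret-matching guarantees the latter is $O(\sqrt{|A_1|/T})$ and in particular has $\limsup_{T\to\infty}$ at most $0$, the same holds for $\max_{\hat x}\bar R_1^T(\hat x)$, so the scheme is a regret-minimizing scheme for player 1 (and inherits the RM convergence rate). The main obstacle I expect is precisely this vertex-reduction step: verifying that the worst-case comparator over the entire polytope can be taken at a vertex, so that the simplex guarantee transfers without loss. This is exactly where biaffinity of $u$ and the finite generation of $\mathcal{X}$ are essential; everything else is a routine change of variables.
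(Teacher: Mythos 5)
Your proposal is correct and follows essentially the same route as the paper: reduce the polytopal game to a simplex game via the matrix of vertices (the paper's ``convex basis''), run regret-matching there, map strategies back through the matrix, and observe that the average external regret in $\Gamma$ against any comparator equals the surrogate regret, so the guarantee transfers. Your extra touches---writing the surrogate explicitly as $\lambda^\top\hat U\mu$ and feeding the reward sequence $u(v_i,y_s)$ directly rather than choosing coordinates for past plays (which sidesteps the paper's footnote on non-unique coordinates)---are harmless refinements of the same argument.
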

  \begin{proof}  
  Let $\{b_1, \dots, b_n\}$ be a convex basis for $\mathcal{X}$, and $\{c_1, \dots, c_m\}$ be a convex basis for $\mathcal{Y}$; also, let $B= (b_1 \mid\cdots\mid b_n)$ and $C=(c_1\mid\cdots\mid c_m)$ be the basis matrices for $\mathcal{X}$ and $\mathcal{Y}$, respectively. We construct a generalized normal-form game $\Gamma^* = (\Delta_n, \Delta_m, u^*)$, where
  \[
    u^*(x, y) \doteq u\xleft(Bx, Cy\xright)
  \]
  for all $x \in \Delta_n, y \in\Delta_m$. Of course $Bx \in \mathcal{X}, Cy \in \mathcal{Y}$ for all $x$ and $y$, so the definition is valid. Let $f^*$ be any of regret-minimizing schemes for normal-form games (e.g., RM or RM$^+$). We construct a regret-minimizing scheme $f$ for $\Gamma$ such that, at each iteration $t = 1, 2, \dots$, 
  \[
    f(x_1, y_1, \dots, x_{t-1}, y_{t-1}) = B f^*(x^*_1,  y^*_1, \dots, x^*_{t-1}, y^*_{t-1}),
  \]
  where $x^*,y^*$ denotes the coordinates of $x,y$ with respect to the basis of
  $\mathcal{X}, \mathcal{Y}$, respectively; note that this definition is
  well-defined since the coordinates are guaranteed to belong to $\Delta_n,\Delta_m$\footnote{Passage to coordinates might not be unique. In this case, any coordinate vector will do, as long as the choice is deterministic.}. The regret induced by this scheme is
  \begin{align*}
    \bar R^T_{1}(\hat x) &= \frac{1}{T}\sum^T_{t=1} u(\hat x, y_t) - u(x_t, y_t)\\
                         &= \frac{1}{T}\sum^T_{t=1} u^*(\hat{x}^*, y^*_t) - u^*(x^*_t, y^*_t).
  \end{align*}
  Notice that the last expression is exactly the average regret for player 1 up to iteration $T$ against action $\hat{x}^*$ in $\Gamma^*$. Since $f^*$ is a regret-minimizing scheme, the average regret against any action converges to zero, meaning that $\limsup_{T\to\infty} \bar{R}^T_1(\hat x) \le 0$ for each $\hat x$, i.e. $\limsup_{T\to\infty} \max_{\hat x\in\mathcal{X}}\bar{R}^T_1(\hat x) \le 0$. This proves that $f$ is a regret-minimizing scheme for $\Gamma$, concluding the proof.
  \end{proof}

	Another way to think about the construction above is that at each iteration, we compute the regret for not playing each of the ``strategies'' forming the vertices of the polytope, and updating the next strategy by taking a convex combination of the vertices, in a way proportional to the regret against them.

  Algorithm~\ref{algo:gnfg} represents an instantiation of the construction given in the proof of Theorem~\ref{thm:generalized hannan}, where the regret-minimizing scheme for the normal-form game was chosen to be RM$^+$. A careful analysis of the construction also reveals that the convergence bound for RM$^+$ carries over, as expressed by Theorem~\ref{thm:max regret polytope}. At time $t$, RM$^+$ projects the cumulative regret $r_{t-1}$ onto the non-negative orthant $\mathbb{R}^n_+$; the projection is equal to the vector $[r_{t-1}]^+$, where $[a]^+_i \doteq \max\{0, a_i\}$.
	
	\begin{algorithm}[h!t]
		\caption{RM$^+$ algorithm for generalized normal-form games played over finitely-generated convex polytopes.}
		\label{algo:gnfg}
		\begin{algorithmic}[1]
		\Procedure{regret-matching$^+$}{$\Gamma$}
		    \Statex $\triangleright$ \textcolor{darkgray}{$\Gamma = (\mathcal{X},\mathcal{Y}, u)$, and $B$ is a fixed convex basis for $\mathcal{X}$}
		    \Statex $\triangleright$ \textcolor{darkgray}{\textbf{note}: this reflects the point of view of player 1}
		    \State $r_0 \gets (0, \dots, 0)^\top \in \mathbb{R}^n$
		    \State $\bar{x} \gets (0, \dots, 0)^\top \in \mathbb{R}^n$
			\For{$t = 1, 2, 3, \dots$}
			  \If{$r_{t-1} \in \mathbb{R}_-^n$}
			    \State $x_{t} \gets $ any action $\in \tilde{\mathcal{X}}_p$
			  \Else{}
			    \State $\Lambda_{t-1} \gets \displaystyle\sum_{i=1}^n\, [r_{t-1}]^+_i$
			    \State $x_t \gets \displaystyle B\frac{[r_{t-1}]^+}{\Lambda_{t-1}}$
			  \EndIf{}
			  \State play action $x_t$
			  \State observe $y_t \in \mathcal{Y}$ played by opponent\vspace{1mm}
			  \State $\displaystyle r_{t} \gets \left[r_{t-1} + \begin{pmatrix}
			  				  u(b_1, y_t) - u(x_t, y_t)\\\vdots\\
			  				  u(b_n, y_t) - u(x_t, y_t)
			  			  \end{pmatrix}\right]^+$\vspace{1mm}
			  \State $\displaystyle \bar{x} \gets \frac{t-1}{t}\bar x + \frac{1}{t}x_t$\vspace{1mm}
			\EndFor{}
			\Statex $\triangleright$ \textcolor{darkgray}{$\bar x$ contains the average strategy for player 1}
		\EndProcedure
		\end{algorithmic}
	\end{algorithm}  

	\begin{theorem}\label{thm:max regret polytope}
	  Given a generalized normal-form game $(\mathcal{X}, \mathcal{Y}, u)$ with
    finitely generated $\mathcal{X}$ and $\mathcal{Y}$, the maximum average external regret for player 1 at iteration $T$, when player 1 plays according to Algorithm~\ref{algo:gnfg}, is bounded by
	  \[
      \vspace{-1mm}
		\max_{\hat x \in \mathcal{X}} \bar R_1^T(\hat x) \le \gamma\frac{\sqrt{|\mathcal{X}|}}{\sqrt{T}}
      \vspace{-1mm}
	  \]
	  where $\gamma \doteq \max_{x, y} u(x, y) - \min_{x,y} u(x,y)$, and $|\mathcal{X}|$ denotes the number of vertices of $\mathcal{X}$.
	\end{theorem}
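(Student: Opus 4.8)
The plan is to recognize that Algorithm~\ref{algo:gnfg} is nothing more than RM$^+$ executed on the simplex game $\Gamma^* = (\Delta_n, \Delta_m, u^*)$ constructed in the proof of Theorem~\ref{thm:generalized hannan}, and then to transfer the known RM$^+$ regret bound for simplexes back to the polytope $\mathcal{X}$. First I would verify that the iterates of the algorithm coincide with the RM$^+$ iterates in $\Gamma^*$: the regret increment in the update of $r_t$ has $i$-th entry $u(b_i, y_t) - u(x_t, y_t) = u^*(e_i, y^*_t) - u^*(x^*_t, y^*_t)$, which is exactly the instantaneous regret of playing vertex $e_i$ instead of $x^*_t = [r_{t-1}]^+/\Lambda_{t-1}$ in $\Gamma^*$; and the strategy $x_t = B x^*_t$ is the image under $B$ of the RM$^+$ strategy $x^*_t$ on $\Delta_n$. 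Hence the cumulative regret vector $r_t$ maintained by Algorithm~\ref{algo:gnfg} is precisely the RM$^+$ cumulative regret for $\Gamma^*$, with player-1 action set of size $n = |\mathcal{X}|$.

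Next I would lift the pointwise regret identity already established in the proof of Theorem~\ref{thm:generalized hannan}, namely $\bar R^T_1(\hat x) = \frac{1}{T}\sum_{t=1}^T u^*(\hat x^*, y^*_t) - u^*(x^*_t, y^*_t)$, to an identity of maxima. Because $\{b_1, \dots, b_n\}$ is a convex basis of $\mathcal{X}$, the map $B : \Delta_n \to \mathcal{X}$ is surjective, so as $\hat x$ ranges over $\mathcal{X}$ its coordinate vector $\hat x^*$ ranges over all of $\Delta_n$ (and the regret depends on $\hat x^*$ only through $B\hat x^* = \hat x$). Consequently
\[
  \max_{\hat x \in \mathcal{X}} \bar R^T_1(\hat x) = \max_{\hat x^* \in \Delta_n} \left( \frac{1}{T}\sum_{t=1}^T u^*(\hat x^*, y^*_t) - u^*(x^*_t, y^*_t) \right),
\]
and the right-hand side is exactly the maximum average external regret of RM$^+$ in the normal-form game $\Gamma^*$.

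At this point I would invoke the RM$^+$ convergence guarantee, which (as shown by \citeauthor{Tammelin15:Solving}) matches the RM bound verbatim and therefore bounds this quantity by $\gamma^* \sqrt{n}/\sqrt{T}$, where $\gamma^* = \max_{x^*, y^*} u^*(x^*, y^*) - \min_{x^*, y^*} u^*(x^*, y^*)$. The only step requiring a little care, and the one I would flag as the main obstacle, is identifying $\gamma^*$ with the stated constant $\gamma$: since $B$ and $C$ are surjective onto $\mathcal{X}$ and $\mathcal{Y}$ and $u^*(x^*, y^*) = u(Bx^*, Cy^*)$, the range of $u^*$ over $\Delta_n \times \Delta_m$ equals the range of $u$ over $\mathcal{X} \times \mathcal{Y}$, giving $\gamma^* = \gamma$. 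Substituting $n = |\mathcal{X}|$ then yields the claimed bound. Once the algorithm is recognized as RM$^+$ on $\Gamma^*$ and surjectivity of $B$, $C$ is used both to align the feasible comparators and to reconcile the two definitions of $\gamma$, the remainder follows immediately from the simplex bound.
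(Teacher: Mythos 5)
Your proposal is correct and follows exactly the route the paper intends: the paper gives no explicit proof of Theorem~\ref{thm:max regret polytope}, asserting only that Algorithm~\ref{algo:gnfg} instantiates the construction of Theorem~\ref{thm:generalized hannan} with RM$^+$ and that ``the convergence bound for RM$^+$ carries over.'' Your write-up simply makes that assertion rigorous---identifying the iterates with RM$^+$ on $\Gamma^*$, using surjectivity of $B$ and $C$ to align the comparator sets and to show $\gamma^* = \gamma$---and all of these steps check out.
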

\section{Behavioral Constraints and Perturbations}\label{sec:behavioral constraints}
Behavioral constraint are linear constraints on the simplexes at each
information set. In order to obtain a regret minimizer for a
behaviorally-constrained EFG, we could try to cast the game as a generalized NFG
by means of the normal-form or sequence form representation (see
Observation~\ref{obs:efg to gnfg}). However, the number of vertices of this
representation is exponential, and therefore it does not work well with
Theorem~\ref{thm:max regret polytope}. Counterfactual Regret~(CFR,
\citeauthor{Zinkevich07:Regret}~\citeyear{Zinkevich07:Regret}) solves this
problem, by defining a regret-minimizing scheme that runs in polynomial time in
the size of the game. Intuitively, CFR minimizes a variant of instantaneous
regret, called \emph{immediate counterfactual regret}, at each information set
separately, and later combines the strategies computed at each information set.
It requires simplex regret minimizers for each information set. If we have a
finite number of them, each information set can be modeled as a
finitely-generated convex polytope. We can then use Theorem~\ref{thm:max regret
  polytope} to get regret minimizers for each information set. Perturbations can
be handled as a special case.



Theorem~\ref{thm:regret cfr} below shows that CFR$^+$ instantiated with such
regret minimizers for each behaviorally-constrained information set converges to
an equilibrium of the constrained EFG. For this approach to be practical, we
need the set of vertices for each information set to be of manageable size, as
reflected in the dependence on $\max_{I\in\mathcal{I}}\sqrt{|\mathcal{Q}^I|}$ in
Theorem~\ref{thm:regret cfr}, where $|\mathcal{Q}^I|$ is the number of vertices
in the behaviorally-constrained simplex at information set $I$.

\begin{restatable}{theorem}{cfrconv}
  \label{thm:regret cfr} Let $(H, Z, A, P, f_c, \mathcal{I}_1, \mathcal{I}_2,
u)$ be an extensive-form game; let $\mathcal{Q}^I \subseteq \Delta_{|A(I)|}$
represent the behaviorally-constrained strategy space at information set $I$,
for all $I\in\mathcal{I}_1 \cup \mathcal{I}_2$. The maximum average external
regret for player 1 in the constrained game at iteration $T$, when player 1
plays according to CFR$^+$, is bounded by
        \[ 
      \vspace{-1mm}
        \bar R^T_1 \le \gamma
|\mathcal{I}_1|\frac{\sqrt{\max_{I\in\mathcal{I}_1} |\mathcal{Q}^I|}}{\sqrt{T}},
      \vspace{-1mm}
        \] where $\gamma \doteq \max_{x, y} u(x, y) - \min_{x,y} u(x,y)$.
\end{restatable}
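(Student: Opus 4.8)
The plan is to reduce the statement to the per-information-set regret bound of Theorem~\ref{thm:max regret polytope} through the standard counterfactual-regret decomposition. First I would invoke the structural result behind CFR~\citep{Zinkevich07:Regret}: in a perfect-recall EFG, the full average external regret of player~1 is bounded above by the sum, over her information sets, of the positive parts of the \emph{immediate counterfactual regrets},
\[
  \bar R^T_1 \;\le\; \sum_{I\in\mathcal{I}_1} \bigl[\bar R^{T}_{1,\mathrm{imm}}(I)\bigr]^+,
\]
where $\bar R^{T}_{1,\mathrm{imm}}(I)$ measures how much player~1 regrets, in hindsight, not having deviated to a fixed local strategy at $I$, with each history $h\in I$ weighted by its counterfactual reach probability $\pi^{\sigma}_{-1}(h)$. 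This inequality is exactly what lets CFR split a global regret-minimization problem into independent local ones, and it applies verbatim here because it is insensitive to \emph{which} local regret minimizer is run at each $I$.

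Next I would observe that the local problem at each information set $I$ is itself an instance of the generalized-normal-form setting of Section~\ref{sec:convex polytopes}. Holding the opponent's (and chance's) behavior fixed, the counterfactual value at $I$ is biaffine in player~1's local strategy $q\in\mathcal{Q}^I$ and in the opponent's play, so running Algorithm~\ref{algo:gnfg} over the finitely-generated polytope $\mathcal{Q}^I$ is precisely a regret minimizer for this local game. Theorem~\ref{thm:max regret polytope} then bounds the immediate counterfactual regret at $I$ by $\gamma^I\sqrt{|\mathcal{Q}^I|}/\sqrt{T}$, where $\gamma^I$ is the range of the local counterfactual utility. The one quantitative point to check is that $\gamma^I\le\gamma$: since the histories in a single information set correspond to disjoint sequences of opponent and chance moves, their counterfactual reach probabilities satisfy $\sum_{h\in I}\pi^{\sigma}_{-1}(h)\le 1$, so the counterfactual value ranges over an interval no wider than $\max_{x,y}u(x,y)-\min_{x,y}u(x,y)=\gamma$.

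Finally I would assemble the pieces. Summing the local bounds over the $|\mathcal{I}_1|$ information sets of player~1 and replacing each $\sqrt{|\mathcal{Q}^I|}$ by $\sqrt{\max_{I\in\mathcal{I}_1}|\mathcal{Q}^I|}$ yields
\[
  \bar R^T_1 \;\le\; \sum_{I\in\mathcal{I}_1}\gamma\frac{\sqrt{|\mathcal{Q}^I|}}{\sqrt{T}}
  \;\le\; \gamma\,|\mathcal{I}_1|\,\frac{\sqrt{\max_{I\in\mathcal{I}_1}|\mathcal{Q}^I|}}{\sqrt{T}},
\]
which is the claimed bound.

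The main obstacle I anticipate is not the arithmetic but the two structural verifications that make the reduction legitimate. The first is confirming that the CFR decomposition inequality still applies when the local decision variable ranges over an arbitrary polytope $\mathcal{Q}^I\subseteq\Delta_{|A(I)|}$ rather than the full simplex, and that it holds for the CFR$^+$ variant (whose $[\,\cdot\,]^+$ accumulation is matched by the RM$^+$ updates in Algorithm~\ref{algo:gnfg}); this is where perfect recall is essential. The second is verifying that Algorithm~\ref{algo:gnfg}, driven by the counterfactual-value payoff vectors $u(b_i,\cdot)$ on the vertices $b_i$ of $\mathcal{Q}^I$, genuinely minimizes the immediate counterfactual regret defined over $\mathcal{Q}^I$—i.e., that the local regret it controls coincides with $\bar R^T_{1,\mathrm{imm}}(I)$—and that the range bound $\gamma^I\le\gamma$ used above does not introduce extra factors.
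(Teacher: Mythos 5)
Your proposal follows essentially the same route as the paper's own proof in Appendix~A: adapt the counterfactual-regret decomposition of \citet{Zinkevich07:Regret} to the constrained setting (restricting the deviation set to strategies compatible with the polytopes $\mathcal{Q}^I$ and taking the local maxima over a convex basis of each $\mathcal{Q}^I$), then plug in the RM$^+$-over-polytopes bound of Theorem~\ref{thm:max regret polytope} at each information set and sum over $\mathcal{I}_1$. The two ``structural verifications'' you flag are precisely the substitutions the paper carries out explicitly, so your plan is sound and matches the intended argument.
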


	\section{Perturbed Normal-Form Games}\label{sec:nfg}
	Section~\ref{sec:convex polytopes} established that, in general, the problem of finding an approximate Nash equilibrium for player 1 in the generalized normal-form game $\Gamma = (\mathcal{X}, \mathcal{Y}, u)$, where $\mathcal{X}$ is a convex polytope generated by $n$ vectors, can be solved via regret-matching.
      
    We now specialize this result for the specific case of perturbed normal-form games. The following holds:
    
    \begin{restatable}{proposition}{pertbasis}\label{prop:basis of perturbation}
	    Let $\Gamma=(A_1,A_2,u)$ be a normal-form game, where $A_1=\{a_1,\dots, a_n\}$, and let $p$ be a perturbation for player 1. Let $\Gamma_p=(\tilde{\mathcal{X}}_p, \tilde{\mathcal{Y}}_p, u^*)$ be the generalized normal-form game corresponding to the perturbation (Definition~\ref{def:pert nfg new}). Then the perturbed action space $\tilde{\mathcal{X}}_p$ is a finitely generated convex polytope of dimension $n$, a basis of which is given by the columns of the following invertible matrix:
    	\[
    		B_p \doteq \begin{pmatrix}
    			\tau_p + p(a_1) & p(a_1) & \cdots & p(a_1)\\
    			p(a_2)        & \tau_p + p(a_2) & \cdots & p(a_2)\\
    			\vdots        & \vdots        & \ddots & \vdots\\
    			p(a_n)        & p(a_n) & \cdots & \tau_p + p(a_n)
    		\end{pmatrix}
    	\]
    	where $\tau_p \doteq 1 - p(a_1) - \dots - p(a_n)$.
      \end{restatable}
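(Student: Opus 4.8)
The plan is to recognize $\tilde{\mathcal{X}}_p$ as an affinely shifted and rescaled copy of the standard simplex, read off its vertices, identify them with the columns of $B_p$, and finally verify that $B_p$ is invertible. The whole argument hinges on one change of variables that trivializes the geometry.

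First I would recall from Definition~\ref{def:pert nfg new} that $\tilde{\mathcal{X}}_p = \{x \in \Delta_n : x_i \ge p(a_i)\ \forall i\}$, introduce the perturbation vector $p = (p(a_1),\dots,p(a_n))^\top$, and apply the affine change of variables $y = x - p$. Since every $x \in \Delta_n$ satisfies $\sum_i x_i = 1$ while $\sum_i p(a_i) = 1 - \tau_p$, this map sends $\tilde{\mathcal{X}}_p$ bijectively onto the scaled simplex $\{y \in \mathbb{R}^n_+ : \sum_i y_i = \tau_p\} = \tau_p\,\Delta_n$: the constraint $x_i \ge p(a_i)$ becomes $y_i \ge 0$, and the normalization becomes $\sum_i y_i = \tau_p$. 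This reduction is the crux, since it replaces the awkward half-space description of $\tilde{\mathcal{X}}_p$ by the completely explicit geometry of a standard simplex.

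Next I would use that $\tau_p\,\Delta_n$ has exactly the $n$ vertices $\tau_p e_j$, where $e_j$ is the $j$-th coordinate vector. Because $x \mapsto y$ is an affine bijection, it carries vertices to vertices, so the extreme points of $\tilde{\mathcal{X}}_p$ are precisely $v_j = p + \tau_p e_j$, $j = 1,\dots,n$. A one-line check shows $v_j$ has coordinates $p(a_i)$ for $i \ne j$ and $p(a_j) + \tau_p$ for $i = j$, i.e.\ $v_j$ is exactly the $j$-th column of $B_p$; moreover each $v_j \in \tilde{\mathcal{X}}_p$ because $\tau_p > 0$ (as $\sum_a p(a) < 1$). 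Hence $\tilde{\mathcal{X}}_p = \conv\{v_1,\dots,v_n\}$ is finitely generated with convex basis the columns of $B_p$. For invertibility I would write $B_p = \tau_p\,\bI + p\,\mones^\top$, exhibiting it as a rank-one update of $\tau_p\,\bI$, and invoke the matrix determinant lemma: $\det B_p = \tau_p^n\bigl(1 + \tfrac{1}{\tau_p}\mones^\top p\bigr) = \tau_p^{n-1}\bigl(\tau_p + \sum_i p(a_i)\bigr) = \tau_p^{n-1} > 0$. This simultaneously shows $B_p$ is invertible and that its $n$ columns are linearly, hence affinely, independent, so they form a genuine $n$-element convex basis and $\tilde{\mathcal{X}}_p$ is a simplex with $n$ vertices.

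I expect the only real subtlety to be bookkeeping around the dimension/basis claim: one must ensure that "$n$ generators" is consistent with $\tilde{\mathcal{X}}_p$ lying in the hyperplane $\sum_i x_i = 1$. Working directly with the half-space description, one would have to argue which subsystems of inequalities are simultaneously tight to enumerate the vertices; the affine reduction to $\tau_p\,\Delta_n$ sidesteps this entirely, since the vertices of a standard simplex are immediate and the bijection transports them without loss. The invertibility computation then pins down affine independence, so the remaining work is routine linear algebra.
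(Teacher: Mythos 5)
Your proof is correct, and it takes a genuinely different (and in one respect more complete) route than the paper's. The paper's appendix proof is a bare-hands double inclusion: for one direction it expands $\sum_i \lambda_i b_i = \sum_i \bigl(p(a_i)+\tau_p\lambda_i\bigr)e_i$ and checks membership in $\tilde{\mathcal{X}}_p$; for the other it constructs explicit convex coefficients $\mu_i = \bigl(\lambda_i - p(a_i)\bigr)/\tau_p$, verifies they sum to one, and computes $\sum_i \mu_i b_i = x$. Your translation $y = x - p$ is the geometric repackaging of that same identification $\tilde{\mathcal{X}}_p = p + \tau_p\,\Delta_n$ --- indeed your normalized simplex coordinates $y_i/\tau_p$ are exactly the paper's $\mu_i$ --- but it buys conceptual economy: since translation commutes with taking convex hulls, the equality $\tilde{\mathcal{X}}_p = \conv\{v_1,\dots,v_n\}$ drops out with no coefficient bookkeeping (you do not even need the remark about extreme points being transported; the hull identity alone suffices). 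More substantively, you prove something the paper's proof silently omits: the proposition asserts that $B_p$ is invertible, yet the appendix argument consists only of the two inclusions and never verifies this. Your decomposition $B_p = \tau_p\,\bI + p\,\mones^\top$ together with the matrix determinant lemma gives $\det B_p = \tau_p^{n-1}\bigl(\tau_p + \sum_i p(a_i)\bigr) = \tau_p^{n-1} > 0$, where $\tau_p > 0$ follows from the strict inequality $\sum_{a\in A_1} p(a) < 1$ in Definition~\ref{def:pert nfg new}; this simultaneously settles the affine independence of the $n$ columns, so the generating set is minimal and $\tilde{\mathcal{X}}_p$ is a simplex with exactly $n$ vertices --- the cleanest reading of the proposition's loosely stated ``dimension $n$'' claim (the polytope has affine dimension $n-1$, sitting in the hyperplane $\sum_i x_i = 1$, as you correctly flag). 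In short, your argument subsumes the paper's and additionally discharges the invertibility claim.
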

    
    This means that Algorithm~\ref{algo:gnfg} is applicable and provides a regret-minimizing scheme. We remark that when computing the instantaneous regrets (Algorithm~\ref{algo:gnfg}, Line~12), it is important to remember these values have to be computed against the basis $\{b_1, \dots, b_n\}$ of $\tilde{\mathcal{X}}_p$. However, computing the (expected) utility of the game when player 1 plays according to a mixed strategy is usually more expensive than the same task when player 1 plays a deterministic action. For this reason, we express the instantaneous regret calculation against $\{b_1, \dots, b_n\}$ in terms of instantaneous regrets against the pure actions $\{e_1, \dots, e_n\}$ in the unperturbed game. In particular, by using the fact that the utility function is biaffine, we can write, for each $i\in \{1,\dots,n\}$,
    \begin{align*}
      u(b_i, y_t) &= u\xleft(\tau_p e_i + \sum_{j=1}^{n} p(a_j)e_j, y_t\xright)\\
                  &= \tau_p u(e_i, y_t) + \sum_{j=1}^n p(a_j) u(e_j, y_t),
    \end{align*}
    so that, by introducing $\phi_{t,i} \doteq u(e_i, u_t) - u(x_t, y_t)$ and the corresponding vector $\phi_t \doteq (\phi_{t,1},\dots,\phi_{t,n})^\top$, we have
    \[
      \vspace{-1mm}
       r_t = r_{t-1} + \tau_p\phi_t + \mathbf{1}\begin{pmatrix}p(a_1)\\ \vdots\\p(a_n)\end{pmatrix}^\top \phi_t.
      \vspace{-1mm}
    \]
	This allows us to compute the regret update in terms of the instantaneous regret against $\{e_1, \dots, e_n\}$ in the unperturbed game, without introducing any overhead from an asymptotic point of view.

	The maximum average external regret for player 1 at iteration $T$ is given by Theorem~\ref{thm:max regret polytope}; in this case $|\tilde{\mathcal{X}}_p| = |A_1|$.
\section{Perturbed Extensive-Form Games}

  \begin{algorithm}[htp]
  	\caption{Regret minimization algorithm for perturbed extensive-form games.}
  	\label{algo:cfr}
  	\begin{algorithmic}[1]
  	\Procedure{regret-match$^+$-infoset}{$I, t$}
  	    \Statex $\triangleright$ \textcolor{darkgray}{we assume $A(I) = \{a_1,\dots, a_n\}$.}\vspace{1mm}
		\If{$r^I_{t-1} \in \mathbb{R}_-^n$}
  			\State $x_{t}^I \gets $ any action $\in \tilde{\mathcal{X}}^I_p$
  		\Else
  		    \State $\displaystyle\Lambda^I \gets \sum_{i=1}^n \left[r^I_{t-1}\right]^+_i$
  			\State $\displaystyle x_{t}^I \gets \begin{pmatrix}p(I,a_1)\\\vdots\\p(I,a_n)\end{pmatrix} + \tau_p(I) \frac{\xleft[r^I_{t-1}\xright]^+}{\Lambda^I}$
  		\EndIf
  	\EndProcedure
  	\end{algorithmic}
  	\vspace{-1.5mm}
  	\begin{algorithmic}[1]
  	\Procedure{traverse}{$h, i, t, \pi_1, \pi_2$}
  	  \Statex $\triangleright$ \textcolor{darkgray}{assume $h$ belongs to information set $I$}
  	  \If{$h\in Z$}
  	    \State\Return $u(h)$
  	  \EndIf{}
  	    \If{$P(h)=c$}\Comment{\textcolor{darkgray}{chance node}}
  	      \State sample $a\sim f_c(h)$
  	      \State\Return \Call{traverse}{$ha, t, \pi_1, \pi_2$}
  	    \ElsIf{$P(h) = 2$}\vspace{1mm}
  	      \State $\displaystyle v^I_t \gets \begin{pmatrix}
  	        \textsc{traverse}{(ha_1, t, \pi_1,y^I_{t,1}\pi_2)}\\\vdots\\\textsc{traverse}{(ha_n, t, \pi_1,y^I_{t,n}\pi_2)}
  	      \end{pmatrix}$\vspace{1mm}
  	    \Else{}\Comment{\textcolor{darkgray}{player 1's turn}}\vspace{1mm}
  	      \State\Call{regret-match-infoset}{$I, t$} \vspace{1mm}
  	      \State $\displaystyle v^I_t \gets \begin{pmatrix}
  	              \textsc{traverse}{(ha_1, t, x^I_{t,1}\pi_1, \pi_2)}\\\vdots\\\textsc{traverse}{(ha_n, t, x^I_{t,n}\pi_1,\pi_2)}
  	    	      \end{pmatrix}$\vspace{1mm}
  		   \State $\bar v \gets (x^I_t)^\top v^I_{t}$\vspace{1mm}
  		   \State $\phi^I_{t} \gets \pi_{2}(v^I_{t} - \textbf{1}\, \bar v)$\vspace{1mm}
  		   \State $\displaystyle r^I_{t} \gets\hspace{-.5mm}\left[r^I_{t-1} + \tau_p(I)\phi^I_{t} + \mathbf{1}\hspace{-1mm}\begin{pmatrix}p(I,a_1)\\ \vdots\\p(I,a_n)\end{pmatrix}^{\hspace{-2mm}\top} \phi^I_{t}\right]^+$\vspace{1mm}
  		   \State $\bar x^I \gets \bar x^I + \pi_1 x_t^I$
  		 \EndIf
  	  \State\Return $\bar v$
  	\EndProcedure
  	\end{algorithmic}
  	\vspace{-1.5mm}
  	\begin{algorithmic}[1]
  	\Procedure{CFR$^+$}{$\Gamma$}\Comment{\textcolor{darkgray}{\small $\Gamma=(H, Z, A, P, f_c, \mathcal{I}_1, \mathcal{I}_2, u)$}}
  		\For{\textbf{all} $I\in\mathcal{I}_1$}
  			\State $r_0^I \gets (0,\dots, 0)^\top \in \mathbb{R}^{|A(I)|}$\vspace{1mm}
  			\State $\bar x^I \gets (0,\dots, 0)^\top \in \mathbb{R}^{|A(I)|}$
  		\EndFor
  		\For{$t = 1, 2, 3, \dots$}
  	      \State play according to strategy $x_t$
  	      \State observe strategy $y_t$ played by opponent
  	      \State\Call{traverse}{$\varnothing,t,1,1$}
  		\EndFor
  		\For{\textbf{all} $I\in\mathcal{I}_1$}
  		  \State $\bar x^I \gets \left({\bar x^I}/{\sum_{i=1}^{|A(I)|}\bar x^I_i}\right)$
  		\EndFor
  		\Statex $\triangleright$ \textcolor{darkgray}{$\bar x$ contains the average strategy for player 1.}
  	\EndProcedure
  	\end{algorithmic}
  \end{algorithm}

  As discussed in Section~\ref{sec:behavioral constraints}, it is possible to
  use CFR in conjunction with any regret-minimizing scheme for generalized NFGs,
  in order to define a regret-minimizing scheme able to support any
  behaviorally-perturbed EFG (thus including the restricted case of perturbed
  EFGs). In Algorithm~\ref{algo:cfr}, we propose an implementation of CFR$^+$,
  i.e. CFR instantiated with the RM$^+$ algorithm
  able to handle perturbed EFGs. Algorithm~\ref{algo:cfr} assumes that we are
  given a perturbation $p$ of the extensive-form game, $\cX_p^I$ denotes the
  perturbed simplex for information set $I$, and $\tau_p(I)$ is as in
  Proposition~\ref{prop:basis of perturbation}.
  
  The following theorem characterizes the convergence guarantee of the proposed algorithm.	
  \begin{restatable}{theorem}{cfrconvpert}\label{thm:regret cfr pert}
	Let $(H, Z, A, P, f_c, \mathcal{I}_1, \mathcal{I}_2, u)$ be an exten\-sive-form game; let $p$ be the perturbation applied to the game. The maximum average external regret for player 1 in the perturbed game at iteration $T$, when player 1 plays according to Algorithm~\ref{algo:cfr}, is bounded by
    \[
      \vspace{-1mm}
      \bar R^T_1 \le \gamma |\mathcal{I}_1|\frac{\sqrt{\max_{I\in\mathcal{I}_1} |A(I)|}}{\sqrt{T}},
      \vspace{-1mm}
    \]
    where $\gamma \doteq \max_{x, y} u(x, y) - \min_{x,y} u(x,y)$.
  \end{restatable}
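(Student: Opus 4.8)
The plan is to obtain this bound as a specialization of Theorem~\ref{thm:regret cfr}, by recognizing that a perturbation is a particular instance of a behavioral constraint. Indeed, the perturbed simplex $\tilde{\mathcal{X}}^I_p = \{x \in \Delta_{|A(I)|} : x_a \ge p(I,a)\ \forall a \in A(I)\}$ is a behaviorally-constrained strategy space, since each lower bound $x_a \ge p(I,a)$ is a linear constraint. Consequently the hypotheses of Theorem~\ref{thm:regret cfr} are met, and the task reduces to (i) checking that Algorithm~\ref{algo:cfr} is the instantiation of CFR$^+$ to which that theorem applies, and (ii) counting the vertices $|\mathcal{Q}^I|$ of each $\tilde{\mathcal{X}}^I_p$ so as to substitute them into the general bound.

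For step (i), I would take the convex basis of each $\tilde{\mathcal{X}}^I_p$ to be the matrix $B_p$ of Proposition~\ref{prop:basis of perturbation}, and verify that the generic RM$^+$ update of Algorithm~\ref{algo:gnfg}, namely $x_t \gets B\,[r_{t-1}]^+/\Lambda_{t-1}$, collapses to the closed form used inside the regret-match$^+$-infoset subroutine. The key identity is that, for any $w \in \Delta_n$, one has $B_p w = \tau_p(I)\,w + (p(I,a_1),\dots,p(I,a_n))^\top$, because every row of $B_p$ contributes $\tau_p(I)\,w_i + p(I,a_i)\sum_j w_j$ and $\sum_j w_j = 1$. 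Applying this to $w = [r^I_{t-1}]^+/\Lambda^I$ recovers exactly the perturbed update, and the accompanying regret recurrence is the biaffine simplification already recorded in Section~\ref{sec:nfg}. Hence Algorithm~\ref{algo:cfr} is precisely CFR$^+$ run with the per-information-set RM$^+$ minimizer over the basis $B_p$.

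For step (ii), Proposition~\ref{prop:basis of perturbation} states that the columns of the invertible $n \times n$ matrix $B_p$ (with $n = |A(I)|$) form a convex basis of $\tilde{\mathcal{X}}^I_p$; since these $n$ columns are precisely its vertices, we get $|\mathcal{Q}^I| = |A(I)|$. Substituting this into the bound of Theorem~\ref{thm:regret cfr} yields
\[
  \bar R^T_1 \le \gamma |\mathcal{I}_1|\frac{\sqrt{\max_{I\in\mathcal{I}_1} |A(I)|}}{\sqrt{T}},
\]
which is the claim. I expect no substantive obstacle: all the analytic work is carried by Theorem~\ref{thm:regret cfr}, and the only mild care required is the identification of Algorithm~\ref{algo:cfr} with the abstract scheme in step (i), which amounts to the one-line matrix computation above.
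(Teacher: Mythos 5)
Your proposal is correct and follows exactly the route the paper takes: its entire proof reads ``Follows as a corollary of Theorem~\ref{thm:regret cfr},'' which is precisely your specialization of the constrained-CFR$^+$ bound using $|\mathcal{Q}^I| = |A(I)|$ from Proposition~\ref{prop:basis of perturbation}. Your verification of the identity $B_p w = \tau_p(I)\,w + (p(I,a_1),\dots,p(I,a_n))^\top$ for $w \in \Delta_n$ correctly makes explicit the detail the paper leaves implicit, namely that Algorithm~\ref{algo:cfr} is indeed the instantiation of Algorithm~\ref{algo:gnfg} with basis $B_p$ at each information set.
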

  \begin{proof}
    Follows as a corollary of Theorem~\ref{thm:regret cfr}.
  \end{proof}
  
  Notice that the bound provided by Theorem~\ref{thm:regret cfr} is the same provided by the original CFR algorithm proposed by \citet{Zinkevich07:Regret}. In other words, our modification does not impair the convergence and speed guarantees given by the original algorithm.
\section{Experimental Evaluation}
\newcommand{\shortcite}{\cite}\vspace{-2mm}

We conducted experiments to investigate the practical performance of our
perturbed-regret-minimization approach when used to instantiate the CFR and
CFR+ algorithms for computing approximate EFPE in EFGs.
We compare these algorithms to state-of-the-art Nash-equilibrium-finding
algorithms: EGT~\cite{Nesterov05:Excessive} on an unperturbed polytope using the state-of-the-art smoothing technique by
Kroer et. al.~\shortcite{Kroer17:Theoretical}, CFR~\cite{Zinkevich07:Regret} and
CFR+~\cite{Tammelin15:Solving}.
We conducted the experiments on Leduc hold'em poker~\cite{Southey05:Bayes}, a
widely-used benchmark in the imperfect-information game-solving community. In
our variant, \emph{Leduc k}, the deck consists of $k$ pairs of cards $1\ldots
k$, for a total deck size of $2k$. We experiment on the standard Leduc game where $k=3$ and a larger game where $k=5$.
Each player initially pays one chip to the
pot, and is dealt a single private card. After a round of betting, a community
card is dealt face up. After a subsequent round of betting, if neither player
has folded, both players reveal their private cards. If either player pairs
their card with the community card they win the pot. Otherwise, the player with
the highest private card wins. In the event that both players have the same
private card, they draw and split the pot. We consider $k\in \left\{ 3,5
\right\}$. We test our approach on games subject to different uniform perturbations $p(I, a) = \xi$ for all information sets $I$ and actions $a \in A(I)$, for $\e \in \left\{ 0.1,0.05, 0.01, 0.005, 0.001 \right\}$.

Figure~\ref{fig:experiments_leduc5_eps} reports on convergence to Nash
equilibrium. The x-axis shows the number of tree traversals performed. We use
tree traversals rather than iterations because EGT requires more tree traversals
than CFR+ per iteration. The y-axis shows the sum of player regrets in the full
(unperturbed) game. For both Leduc 3 and 5, we find that the $\e$ perturbations
have only a small effect on overall convergence rate until convergence within
the perturbed polytope, at which point the regret in the unperturbed game stops
decreasing, as expected. Until bottoming out, the convergence is almost
identical for all CFR+ algorithms. This shows that our approach can be utilized
in practice: there is no substantial loss of convergence rate. Later in the run
once the perturbed algorithms have bottomed out, there is a tradeoff between
exploitability in the full game and refinement (i.e., better performance in
low-probability information sets).
\begin{figure}[t]
  \centering
  \includegraphics[scale=.68]{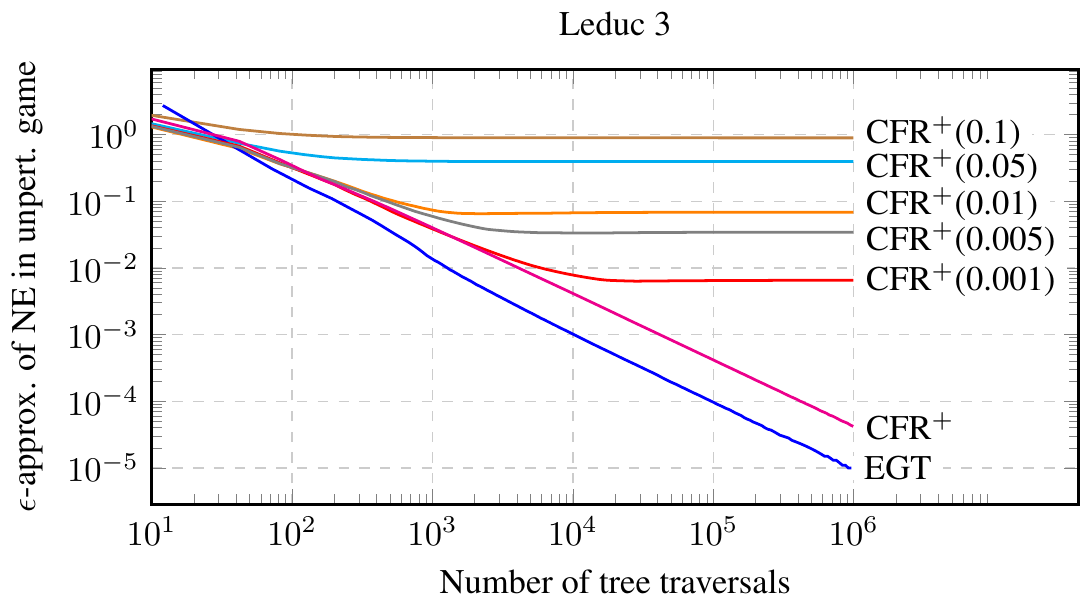}
  \includegraphics[scale=.68]{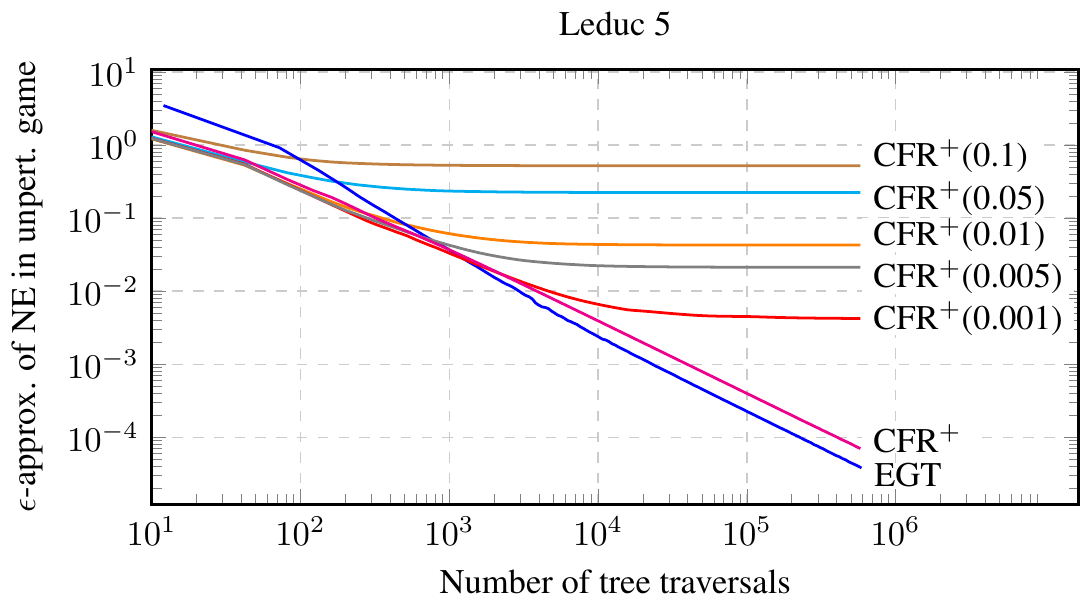}

  %
  \vspace{-.1in}
  \caption{Regret in Leduc 3 and Leduc 5 as a function of the number of
    iterations for {\egt} and CFR+ with various $\e$ perturbations (denoted in
    parentheses). Both axes are on a log scale.}
  \label{fig:experiments_leduc5_eps}
  \vspace{-5mm}
\end{figure}

The second set of experiments, Figure~\ref{fig:experiments_leduc5_max_regret}, investigates the improvement that our perturbation
approach achieves compared to standard Nash equilibrium solutions in terms of
equilibrium refinement. Our measure of refinement is the maximum
regret at any information set, conditioned on reaching that information set. As discussed, convergence to a Nash Equilibrium does not guarantee that this measure goes to zero. Again, the
x-axis shows the number of tree traversals performed. The y-axis
shows the maximum regret at any individual information set. Both unperturbed
CFR+ and EGT perform badly in both games with respect to this measure of
refinement. In Leduc 5, both have maximum regret two orders of magnitude worse
than the perturbed approach. In Leduc 3, EGT still
does as poorly. CFR+ does slightly better, but is still worse than our stronger
refinements by more than an order of magnitude.
The maximum regret one can possibly cause in an
information set in either Leduc game is 23,
so CFR+ and unperturbed EGT also do poorly in that sense.

In contrast to this, we find that our $\e$-perturbed
solution concepts converge to a strategy with low regret at every information
set. The choice of $\e$ is important: for $\e=0.001$, the smallest perturbation,
we see that it takes a long time to converge at low-probability information
sets, whereas we converge reasonably quickly for $\e=0.01$ or $\e=0.005$; for
$\e=0.1$ and $\e=0.05$ the perturbations are too large, and we end up converging
with relatively high regret (due to being forced to play every action with
probability $\e$). Thus, within this set of experiments, $\e\in
\left[0.005,0.01\right]$ seems to be the ideal amount of perturbation.

\begin{figure}[t]
  \centering
  \includegraphics[scale=.68]{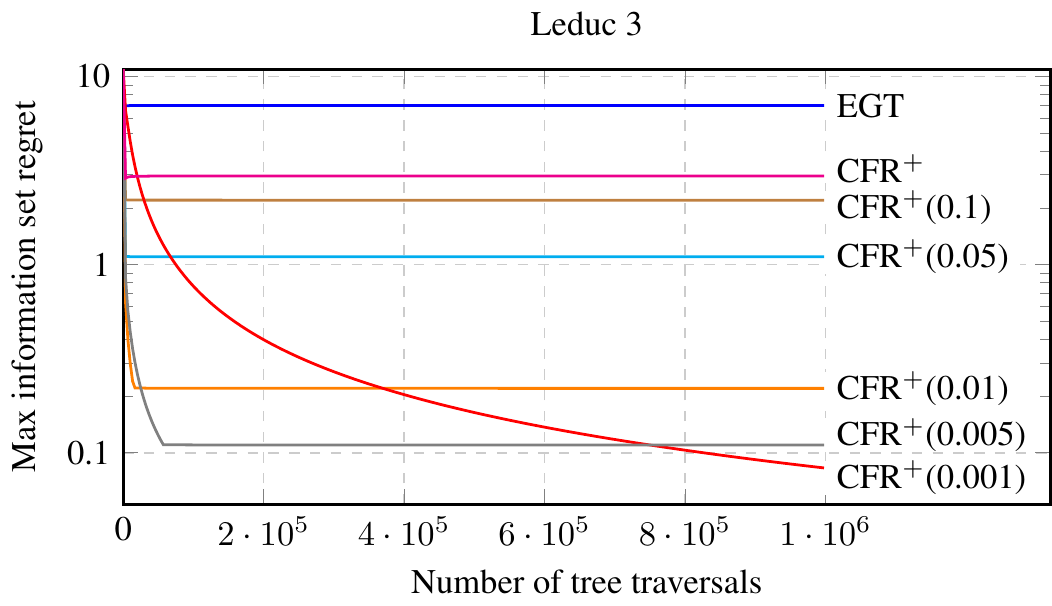}
  \includegraphics[scale=.68]{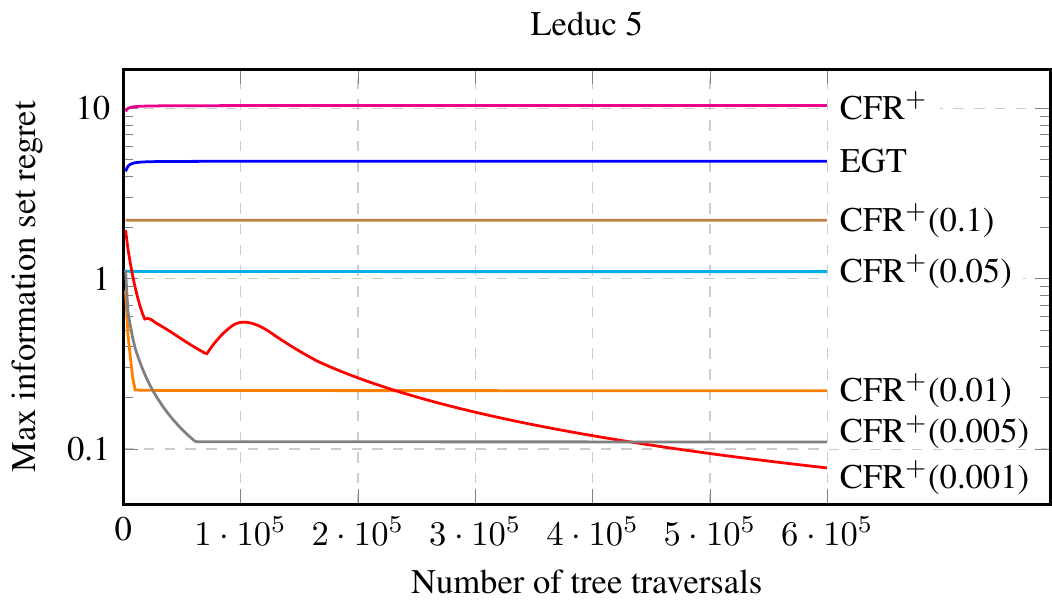}


  \vspace{-.1in}
  \caption{Maximum regret at any individual information set in Leduc 3 and
    Leduc 5, as a function of the number of iterations, for standard {\egt} as
    well as with various $\e$ perturbations (denoted EGT($\e$)) and CFR+. The y-axis is on a log scale.}
  \label{fig:experiments_leduc5_max_regret}
  \vspace{-5mm}
\end{figure}

\section{Discussion}

We extended the RM and {\rmp} regret minimization algorithms to
finitely-generated convex polytopes, and specialized our results to
linearly-constrained simplexes and behaviorally-perturbed EFGs. We then showed
how this allows us to compute an approximate EFPE. Our experiments showed that
this approach leads to much stronger strategies for information sets reached
with low probability, while maintaining the strong convergence rate of CFR$^+$.

Our experiments raise an interesting question. Across both games, we see that
the maximum information set regret goes down much faster for larger amounts of
perturbation, but then it bottoms out earlier than for smaller perturbations (as
expected). To get the best of both large and small perturbations, it may be
possible to decrease the perturbations over time, leading to faster convergence
rate, while never bottoming out. However, this has a number of challenges
associated with it. Most importantly, we need a variant of RM or {\rmp} that can
handle a slowly expanding feasible set within the simplex. This would also
require decreasing the perturbations at the correct rate; if decreased too
quickly, it is unlikely that we will converge to a refinement, and if decreased
too slowly, we might still bottom out.


The CFR algorithms have been shown to work well with a number of other
techniques, notably sampling~\citep{Lanctot09:Monte} and
abstraction~\citep{Lanctot12:no-regret,Kroer16:Extensive_Imperfect}. It would be
both theoretically and practically interesting to see how well our refinement
approach works in conjunction with these techniques.

\section*{Acknowledgments}
This material is based on work supported by NSF grant IIS-1617590 and ARO award W911NF-17-1-0082.
\bibliographystyle{named}
\bibliography{dairefs}

\clearpage\newpage
\section*{Appendix A: Omitted Proofs}

\pertbasis*
\begin{proof}
        ($\subseteq$) Call $b_1, \dots, b_n$ the columns of $B_p$. Given $\lambda_1, \dots, \lambda_n \in \mathbb{R}_+$ with $\lambda_1 + \dots + \lambda_n = 1$, we have
        \begin{align*}
          \sum_{i=1}^n\lambda_i b_i = \sum_{i=1}^n (p(a_i) + \tau_p \lambda_i) e_i \in \tilde{\mathcal{X}_p}.
        \end{align*}
        This proves that the convex hull of $\{b_1,\dots, b_n\}$ is a subset of $\tilde{\mathcal{X}}_p$.
        
        ($\supseteq$) Let $x \in \tilde{\mathcal{X}}_p$. There exist $\lambda_1, \dots, \lambda_n$, with $\lambda_i \ge p(a_i)$ for all $i\in\{1,\dots,n\}$ and $\lambda_1+\dots+\lambda_n=1$, such that
        \[
          x = \lambda_1 e_1 + \dots + \lambda_n e_n.
        \]
        Define
        \[
          \mu_i \doteq \frac{\lambda_i - p(a_i)}{\tau_p} \ge 0, \qquad i\in\{1,\dots, n\}.
        \]
         By using the fact that $\sum_i \lambda_i = 1$ and the definition of $\tau_p$, we find
         \[
           \mu_1 + \dots + \mu_n = \frac{\sum_{i}\lambda_i - \sum_i p(a_i)}{\tau_p} = 1.
         \]
         We have
		 \begin{align*}
		   \sum_{i=1}^{n} \mu_i b_i &= \sum_{i=1}^n \frac{\lambda_i - p(a_i)}{\tau_p} b_i\\
		                            &= \sum_{i=1}^n \frac{\lambda_i - p(a_i)}{\tau_p} \left(\tau_p e_i + \sum_{j=1}^n p(a_j) e_j \right)\\
		                            &= \xleft(\sum_{i=1}^n \mu_i\xright)\sum_{j=1}^n p(a_j)e_j+\sum_{i=1}^n \lambda_i e_i - p(a_i)e_i\\
		                            &= x + \sum_{j=1}^n p(a_j) e_j - \sum_{i=1}^n p(a_i)e_i
		                            = x.
		 \end{align*}
        This proves that the convex hull of $\{b_1, \dots, b_n\}$ is a superset of $\tilde{\mathcal{X}}_p$, completing the proof.
      \end{proof}
      
\cfrconv*
  \begin{proof}
    Consider the proof of Theorem~3 given by~\citet[Appendix
    A.1]{Zinkevich07:Regret}. In order to account for the constrained action
    spaces $\mathcal{Q}^I$, the definition of (average) full counterfactual
    regret for player $i$ up to iteration $T$ relative to information set $I$
    has to be changed as follows (we use the same symbols and notation as in
    their paper):
    \[
      \bar R_{i,\text{full}}^T = \frac{1}{T} \max_{\sigma' \in \tilde\Sigma_i} \sum_{t=1}^T \pi_{-i}^{\sigma^t} \mathbb{E}(u_i(\sigma^t \mid_{D(I)\to \sigma'}) - u_i(\sigma^t,I)).
    \]
    In other words, we substitute the domain of the max, originally $\Sigma_i$, substituting the set of unconstrained strategies over the information sets in $D(I)$, with the space $\tilde{\Sigma}_i$ of strategies compatible with the constrained action polytopes $\mathcal{Q}^{I'}$, for all $I'\in D(I)$. The same substitution propagates in Equations (10) -- (13).
    
    Similarly, in Equations (10) -- (14) we substitute the max operation $\max_{a \in A(I)}$ with $\max_{q \in \mathcal{B}^I}$, where $\mathcal{B}^I$ is any fixed convex basis of choice for $\mathcal{Q}^I$.
    This modification needs to be applied to the definition on average immediate counterfactual regret (Equation~6), so that
    \[
      \bar R^T_{i,\text{imm}} = \frac{1}{T} \max_{q\in\mathcal{B}^I} \sum_{i=1}^T \pi_{-i}^{\sigma^t}(I)\mathbb{E}(u_i(\sigma^t \mid_{I\to q}, I)-u_i(\sigma^t, I))
    \]
    Intuitively, this is consistent with the crucial point in the construction of the proof of Theorem~\ref{thm:generalized hannan} of this paper.
    Furthermore, this makes it so that the first part of the expression on the right-hand side of Equation~(11) is the (average) immediate regret against the basis vector $q$ of $\mathcal{Q}^I$, 
    
    This proves that CFR defines a regret-minimizing scheme on behaviorally-constrained EFGs. Specifically, for each player $i\in\{1, 2\}$,
    \[
      \bar R^T_i \le \sum_{I \in \mathcal{I}_i} \bar R^{T,+}_{i,\text{imm}}(I).
    \]
    We can then locally minimize the immediate regret at each information set with a regret-minimizing scheme working on $\mathcal{Q}^I$, in this case RM$^+$ (Algorithm~\ref{algo:gnfg}). The same proof of Theorem~4 from~\citet{Zinkevich07:Regret} can be applied, substituting the convergence bound of the unconstrained version of RM$^+$ with the new, constrained version (Theorem~\ref{thm:max regret polytope} of the present paper).
  \end{proof}
  
\section*{Appendix B: Omitted Pseudocode}
	\begin{algorithm}[H]
		\caption{Regret minimization algorithm for perturbed normal-form games.}
		\label{algo:pnfg}
		\begin{algorithmic}[1]
		\Procedure{perturbed-regret-matching$^+$}{$\Gamma, p$}
		    \Statex $\triangleright$ \textcolor{darkgray}{\textbf{note}: this reflects the point of view of player 1}
		    \State $r_0 \gets (0, \dots, 0)^\top \in \mathbb{R}^n$
		    \State $\bar{x} \gets (0, \dots, 0)^\top \in \mathbb{R}^n$
			\For{$t = 1, 2, 3, \dots$}
			  \If{$r_{t-1} \in \mathbb{R}_-^n$}
			    \State $x_{t} \gets $ any action $\in \tilde{\mathcal{X}}_p$
			  \Else{}
			    \State $\Lambda_{t-1} \gets \displaystyle\sum_{i=1}^n\, [r_{t-1}]^+_i$
			    \State $x_t \gets \displaystyle \begin{pmatrix}
			    	    p(a_1)\\
			    	    \vdots\\
			    	    p(a_n)
			          \end{pmatrix}+ \tau_p \frac{[r_{t-1}]^+}{\Lambda_{t-1}}$
			  \EndIf{}
			  \State play action $x_t$
			  \State observe action $y_t \in \mathcal{Y}$ played by opponent\vspace{1mm}
			  \State $\displaystyle \phi_{t} \gets \begin{pmatrix}
			  				  u(e_1, y_t) - u(x_t, y_t)\\\vdots\\
			  				  u(e_n, y_t) - u(x_t, y_t)
			  			  \end{pmatrix}$\vspace{1mm}
			  \State $r_{t} \gets \left[r_{t-1} + \tau_p\phi_{t} + \mathbf{1}\begin{pmatrix}p(a_1)\\ \vdots\\p(a_n)\end{pmatrix}^\top \phi_{t}\right]^+$\vspace{1mm}
			  \State $\displaystyle \bar{x} \gets \frac{t-1}{t}\bar x + \frac{1}{t}x_t$\vspace{1mm}
			\EndFor{}
			\Statex $\triangleright$ \textcolor{darkgray}{$\bar x$ contains the average strategy for player 1}
		\EndProcedure
		\end{algorithmic}
	\end{algorithm}  

\end{document}